\documentclass[pra,10pt,twocolumn,superscriptaddress,floatfix,showpacs]{revtex4-1}
\usepackage[utf8]{inputenc}
\usepackage[T1]{fontenc}     
\usepackage[british]{babel}  
\usepackage[sc,osf]{mathpazo}\linespread{1.05}  
\usepackage[scaled=0.86]{berasans}  
\usepackage[colorlinks=true, allcolors=blue, urlcolor=blue]{hyperref}  
\usepackage{graphicx} 
\usepackage[babel]{microtype}  
\usepackage{amsmath,amssymb,amsthm,bm,amsfonts,mathrsfs,bbm} 

\usepackage{xspace}  
\usepackage{pgfplots}
\usepackage{xcolor,colortbl}
\usepackage{array}
\usepackage{bigstrut}
\usepackage{ulem}

\usepackage{caption}
\usepackage{subcaption}
\usepackage{tabularx}
\newcolumntype{C}{>{\centering\arraybackslash}X}
\usepackage{lipsum}

\newcounter{subeq}



\newcommand{\be}{\begin{equation}}
	\newcommand{\ee}{\end{equation}}
\newcommand{\bea}{\begin{eqnarray}}
	\newcommand{\eea}{\end{eqnarray}}
\newcommand{\bes}{\begin{equation*}}
	\newcommand{\ees}{\end{equation*}}
\newcommand{\beas}{\begin{eqnarray*}}
	\newcommand{\eeas}{\end{eqnarray*}}





\newtheorem{thm}{Theorem}
\newtheorem*{thm*}{Theorem}

\newtheorem{lem}[thm]{Lemma}
\newtheorem*{lem*}{Lemma}

\newtheorem*{prop*}{Proposition}

\newtheorem*{lipschitzLem*}{Lemma \ref{lipschitz}}
\newtheorem*{lipschitzCubeLem*}{Lemma \ref{lipschitzCube}}
\newtheorem*{pgmNearlyOptimalThm*}{Theorem \ref{pgmNearlyOptimal}}



\begin{document}

\title{Self-testing quantum states via nonmaximal violation in Hardy's test of nonlocality}


\author{Ashutosh Rai}
\affiliation{School of Electrical Engineering, Korea Advanced Institute of Science and Technology, Daejeon 34141, Republic of Korea}
\affiliation{Institute of Physics, Slovak Academy of Sciences, 845 11 Bratislava, Slovakia}

\author{Matej Pivoluska}
\affiliation{Institute of Physics, Slovak Academy of Sciences, 845 11 Bratislava, Slovakia}
\affiliation{Institute of Computer Science, Masaryk University, 602 00 Brno, Czech Republic}

\author{Souradeep Sasmal}
\affiliation{Light and Matter Physics, Raman Research Institute, Bengaluru 560080, India}

\author{Manik Banik}
\affiliation{School of Physics, IISER Thiruvananthapuram, Vithura, Kerala 695551, India}

\author{Sibasish Ghosh}
\affiliation{Optics \& Quantum Information Group, The Institute of Mathematical Sciences, HBNI, C.I.T. Campus, Taramani, Chennai 600113, India}

\author{Martin Plesch}
\affiliation{Institute of Physics, Slovak Academy of Sciences, 845 11 Bratislava, Slovakia}
\affiliation{Institute of Computer Science, Masaryk University, 602 00 Brno, Czech Republic}


\begin{abstract}
Self-testing protocols enable certification of quantum devices without demanding full knowledge about their inner workings. A typical approach in designing such protocols is based on observing nonlocal correlations which exhibit maximum violation in a Bell test. We show that in Bell experiment known as Hardy’s test of nonlocality not only the maximally nonlocal correlation self-tests a quantum state, rather a nonmaximal nonlocal behavior can serve the same purpose. We, in fact, completely characterize all such behaviors leading to self-test of every pure two qubit entangled state except the maximally entangled ones. Apart from originating a novel self-testing protocol, our method provides a powerful tool towards characterizing the complex boundary of the set of quantum correlations.
\end{abstract}


\maketitle
\section{Introduction}
Learning quantum properties of an unknown physical system is essential for designing and testing devices based on the laws of quantum mechanics. Complete information about the physical state~\cite{QST,Gross2010} and process~\cite{Brien2004,Mohseni2008} of such a device can be obtained through tomography which requires considerable resources for implementation. On the other hand, some particular properties of a quantum system, like certification of quantum entanglement or incompatibility of measurements, can be learnt with less resources by constructing suitable witness operators~\cite{Guhne2009,Horodecki2009,Carmeli2019,Bae2020,Bavaresco2018}. The aforesaid methods still might not be optimal or available in some scenarios \cite{Reichardt2012,Vazirani2014,Miller2016}. Interestingly, however, some physical systems can be certified by employing comparatively far less resources through \textit{device-independent} tests where a device is treated simply as a black box~\cite{Scarani2012,Pironioetal2016, AcinandNavascues2017}. Then only from the input-output statistics termed as \textit{correlation} or \textit{behavior} of the box one can find the quantum state of the device. Such a certification is referred to as a self-test as it enables a user to verify the device without knowing details of its inner-workings~\cite{Mayers2004,McKague2011,YN2013,Bamps-Pironio2015,Wang-Wu-Scarani2016,Coladangeloetal,Coopmanset.al.2019,Jed2020,Ishizaka2020,Rabelo2012,DiCabello2021,Baccarietal2020,SupicBowles2020}. 

Bell inequalities naturally fit into the device-independent paradigm for certification of quantum systems since their derivations, based on the assumptions of local-realism~\cite{EPR1935}, are independent from a quantum description of the physical state for the system or measurements applied to it~\cite{Bell1964,CHSH,Hardy1993,MJWHall2011,Scarani2019,Brunneretal2014}. 
Therefore, various kind of Bell inequalities play a central role in the construction of self-testing protocols~\cite{Mayers2004,McKague2011,YN2013,Bamps-Pironio2015,Wang-Wu-Scarani2016,Coladangeloetal,Coopmanset.al.2019,Jed2020,Ishizaka2020,Rabelo2012,DiCabello2021,Baccarietal2020,SupicBowles2020}. In a Bell test only some special type of input-output statistics can self-test a quantum state; the one which can be realized (up to local isometry) by a unique quantum state and measurements~\cite{SupicBowles2020}. 
Further, within the (convex) set of all quantum correlations in a specified Bell scenario, behaviors leading to self-testing protocols must necessarily be  extremal points of the quantum set~\cite{Rabelo2012,DiCabello2021,SupicBowles2020}. 
Then the fact that any linear Bell inequality is maximized at some extremal point of the quantum set leads to a natural intuition that self-testing occurs on achieving maximal violation~\cite{SupicBowles2020,YN2013,Bamps-Pironio2015,Wang-Wu-Scarani2016,Coladangeloetal,Coopmanset.al.2019,Jed2020,Ishizaka2020,Rabelo2012,Baccarietal2020,DiCabello2021}. 

In our work, we present a different approach for self-testing quantum states. By considering a Bell experiment called Hardy's test of nonlocality~\cite{Hardy1993} we show that self-test of a quantum state is possible also with a non-maximal violation. Such self-tests can be achieved for a two-parametric set of pure qubit states, which cover the whole spectrum from almost-not-entangled to almost-fully-entangled states. This result is interesting per se, as it allows self-testing of a broad spectrum of states through a single Bell test. On top of that, it has two important corollaries. First, the presented method using the concave cover approach is very general and can also be applied to other types of Bell tests. Second, as the behaviors which lead to self-testing are extremal points of the set of all quantum correlations, our result forms a stepping stone in the way of characterizing this set~\cite{Gohetal2018,Ishizaka2018,QVoid2019} and provides an important tool for future work in this area; one such step is realized, as we find that a part of the boundary of the quantum set is determined from Hardy's nonlocal correlations.

The subsequent parts of the paper are organized as follows. In Sec. II we first introduce Hardy’s test of nonlocality and then in Sec. III we characterize all two-qubit states that can demonstrate Hardy’s nonlocality. Sec. IV contain the derivation of the main results of this paper which follows by considering Hardy's test in a black-box scenario where the dimension of quantum state of the box is unknown. Finally, in the concluding Sec. V we provide a discussion and summary of our work.

\section{Hardy's Test}
The nonlocality test proposed by Hardy~\cite{Hardy1993} relates to a Bell experiment with two space-like separated parties, Alice and Bob, who share parts of a composite physical system. Alice randomly chooses to perform one of measurements $x\in\{A_0,A_1\}$ and Bob randomly chooses to perform one of measurements $y\in\{B_0,B_1\}$ on their respective parts. 
Outcomes of all the measurements are binary, denoted $a\in\{\pm1\}$ for Alice and $b\in \{\pm 1\}$ for Bob. 
Result of the experiment termed a behavior (correlation) is recorded in a vector of probabilities: $\vec{\mathscr{P}}=\{p(a,b\vert x,y): ~x\in \{A_0,A_1\},~y\in\{B_0,B_1\},~\mbox{and}~ a,b\in \{\pm1\}\}$. A behavior is termed \textit{local}, if it can be expressed in factorized form, {\it i.e.}, $p(a,b\vert x,y)=\int_{\lambda \in \Lambda} d\lambda~p(\lambda)p(a|x,\lambda)p(b|y,\lambda)$, where $p(\lambda)$ is a probability distribution over a set of local-hidden-variables $\Lambda$, and $p(a|x,\lambda),p(b|y,\lambda)$ are local response functions of Alice and Bob respectively, which without loss of any generality can be considered as deterministic \cite{Fine1982}. 
Any behavior that cannot have a local-hidden-variable model is called \textit{nonlocal}. 
Hardy showed that if the four conditions
\begin{subequations}
\begin{align}
	p_{\mbox{\tiny Hardy}}\equiv p(+1,~+1~\vert~ A_0,~B_0)&>0, \label{eq1a}\\
	p(+1,~-1~\vert~ A_0,~B_1)&=0, \label{eq1b}\\
	p(-1,~+1~\vert~ A_1,~B_0)&=0,\label{eq1c}\\
	p(+1,~+1~\vert~ A_1,~B_1)&=0,\label{eq1d}
\end{align}
\end{subequations}
are satisfied, then the resulting behavior is necessarily nonlocal. The probability $p_{\mbox{\tiny Hardy}}$ in Eq.~(\ref{eq1a}) quantifies the amount of nonlocality and it attains the maximum value {\small $p^{\tiny \mbox{max}}_{\tiny \mbox{Hardy}}=(-11 + 5 \sqrt{5})/2$} in quantum mechanics; the maximum is achieved with projective measurements on a pure two qubit state~\cite{Rabelo2012,SeshadreesanGhosh2011}. 

\section{Two-qubit states showing Hardy's nonlocality}
In a two qubit state space {\small $\mathbb{C}^2\otimes \mathbb{C}^2$}, Hardy's nonlocal behaviors can result only from projective measurements on pure entangled states~\cite{GKar1997}. Due to local-unitary equivalence of measurements on suitably rotating the state by applying local unitary maps, without loss of generality,  let us consider projective measurements of the following form
\begin{subequations}
\begin{align}
A_0&=\vert0\rangle \langle 0\vert -\vert1\rangle \langle 1\vert,~~~~~A_1=\vert u_0\rangle\langle u_0\vert-\vert u_1\rangle\langle u_1\vert;\\
B_0&=\vert0\rangle \langle 0\vert -\vert1\rangle \langle 1\vert,~~~~~B_1~=\vert v_0\rangle\langle v_0\vert-\vert v_1\rangle \langle v_1\vert;
\end{align}\label{eq2}
\end{subequations}
where,
\begin{align*}
&\vert u_0\rangle=C_\alpha\vert 0\rangle + e^{\iota \phi}S_\alpha\vert 1\rangle,~\vert u_1\rangle=-S_\alpha\vert 0\rangle + e^{\iota \phi}C_\alpha\vert 1\rangle,\nonumber\\
&\vert v_0\rangle=C_\beta\vert 0\rangle + e^{\iota \xi}S_\beta\vert 1\rangle,~\vert v_1\rangle=-S_\beta\vert 0\rangle + e^{\iota \xi}C_\beta\vert 1\rangle, \nonumber
\end{align*}
with $C_z:=\cos(z/2),~S_z:=\sin(z/2)$ and $\alpha,\beta\in[0, \pi]$ and $\phi, \xi \in [0, 2\pi)$. Then, any pure two qubit state $\vert \psi\rangle$ satisfying constraints of Eqs.~(\ref{eq1b}-\ref{eq1d}) must be orthogonal to $\vert 0\rangle \!\otimes\!\vert v_1\!\rangle$, $\vert u_1\!\rangle \!\otimes\! \vert 0\rangle$, and $\vert u_0\!\rangle \!\otimes\! \vert v_0\!\rangle$, and therefore it must be of the form
\begin{align}
\vert \psi\rangle_{\tiny \mbox{Hardy}}\!=\!\frac{T_\alpha\vert u_0v_1\!\rangle\!+\! T_\beta\vert u_1v_0\!\rangle\!+\!\vert u_1v_1\!\rangle}{\sqrt{1+T^2_\alpha+T^2_\beta}};~~~T_z:=\tan\frac{z}{2}.\label{eq3}
\end{align}
\normalsize
The set of sixteen probabilities derived from the two qubit state in Eq. (\ref{eq3}) and measurements given by Eq.\eqref{eq2} can be expressed in an array as follows 
\begin{widetext}
{ \begin{align}
\vec{\mathscr{P}}_{\tiny \mbox{Hardy}}	
\equiv
\begingroup
\setlength{\tabcolsep}{20pt} 
\renewcommand{\arraystretch}{2} 
\begin{array}{c||c|c|c|c|} 
 & \!(\!+,+\!)\! & \!(\!+,-\!)\!&\!(\!-,+\!)\!&\!(\!-,-\!)\!\\ \hline\hline
\!A_0B_0\!\! &\!\dfrac{(\!1\!-\!r\!) r (\!1\!-\!s\!) s}{1\!-\!r s} \!&\!\dfrac{(\!1\!-\!r\!)^2 s}{1\!-\!r s} \!&\!(\!1\!-\!r\!)\!(\!1\!-\!s\!)\!&\!r \!\!\\\hline 
\!A_0B_1\!\! & \!(1\!\!-\!\!r) s & 0 \!&\!\dfrac{(\!1\!-\!r\!) r s^2}{1\!-\!r s} \!&\!\dfrac{1\!-\!s}{1\!-\!r s}\!\!\\\hline
\!A_1B_0\!\! &\! \dfrac{(\!1\!-\!r\!) (\!1\!-\!s\!)}{1\!-\!r s} \!&\! \dfrac{r (\!1\!-\!s\!)^2}{1\!-\!r s} \!&\! 0 \!&\! s\!\!\\\hline
\!A_1B_1\!\! &\! 0 \!&\! 1-s \!&\!\dfrac{(\!1\!-\!r\!) s}{1\!-\!r s}\!&\! \dfrac{r (\!1\!-\!s\!) s}{1\!-\!r s}\!\!\\\hline
\end{array}
\endgroup
\label{eq4}
\end{align}}
\end{widetext}
where $r:=1-S^2_\alpha S^2_\beta$ and $s:=r^{-1}C^2_\alpha$.
Note that $0\leq r,s\leq 1$ for any choice of $\alpha,\beta \in [0,\pi]$, and a behavior $\vec{\mathscr{P}}_{\tiny \mbox{Hardy}}$ is nonlocal if and only if $(r,s) \in (0,1)\times (0,1)$. The value of $\alpha$ and $\beta$ for a given value of $r$ and $s$ can be obtained from
{\small\begin{align}
\alpha=2\sin^{-1} \sqrt{1-rs},~~~~\beta=2\sin^{-1} \sqrt{(1-r)/(1-rs)}. \label{eq5}
\end{align}}
The two qubit state in Eq. (\ref{eq3}) when expressed in the standard basis, and in terms of the parameters $r$ and $s$, takes the following form
{\small	\begin{eqnarray}
\!\!\vert\!\psi(r,\!s)\!\rangle_{\tiny \mbox{Hardy}}\!\!=\!-\!\sqrt{\!\frac{(1\!-\!r) r (1\!-\!s) s}{1\!-\!r s}}~\vert00\rangle\!&-& e^{\iota \xi}\!\sqrt{\!\frac{(1\!-\!r)^2 s}{1\!-\!r s}}~\vert 01\rangle \nonumber \\
	-e^{\iota \phi}\sqrt{\!(\!1\!-\!r\!) (\!1\!-\!s\!)}~\vert 10\rangle\!&+&e^{\iota (\xi+\phi)}\sqrt{\!r}~\vert 11\rangle.\label{eq6}
\end{eqnarray}}
Hardy nonlocal states in Eq.(\ref{eq3}) [or Eq.(\ref{eq6})] covers all (up to local-unitary rotations of basis) pure two qubit entangled states except the maximally entangled ones~\cite{Goldstein1994,Jordan1994}. Moreover, in the two qubit space, due to the constraints in Eq.~(\ref{eq1a}-\ref{eq1b}), these states are uniquely determined from any arbitrarily fixed measurements~\cite{Jordan1994}. 

\section{Hardy’s test with unknown state of arbitrary dimension}
Let us now consider a black-box experiment (under i.i.d. assumption) where the quantum state and measurements of the box are unknown to Alice and Bob. The main result of this paper is that if the probabilites from the black-box are in the form given by Eq.~(\ref{eq4}), then they are a self-test of the two qubit state in Eq.~(\ref{eq6}). The black-box experiment records $p(-1,-1\vert A_0,B_0)\equiv\mathbf{r}$ and $p(-1,-1\vert A_1,B_0)\equiv\mathbf{s}$, and then verifies if the remaining probabilities are expressible in terms of $\mathbf{r}$ and $\mathbf{s}$ in the form of Eq.(\ref{eq4}); if true, then the claim is that, it is a self-test of the two qubit state $\vert \psi(\mathbf{r},\mathbf{s})\rangle_{\tiny \mbox{Hardy}}$ in Eq.~(\ref{eq6}). 

The main idea in the proof of our claim follows from a Jordan canonical form for measurements $A_0, A_1$ ($B_0,B_1$) of Alice (Bob), and application of Jensen's inequality. Suppose some unknown state $\rho$ is shared between Alice and Bob and $\Pi_{a\vert x}$ ($\Pi_{b\vert y}$) is the measurement operator associated with outcome $a$ ($b$) when Alice (Bob) measures observable $x$ ($y$). 
Then we have, $p(a,b\vert x,y)=\mbox{Tr}(\rho~ \Pi_{a\vert x} \otimes \Pi_{b\vert y})$.
Since the dimension of the state space is unrestricted, Neumark's dilation theorem allows us to reduce the analysis to projective measurements. 
Then, let the observable of Alice and Bob be some Hermitian operators with eigenvalues in $\{+1,-1\}$ as follows
{\small \begin{eqnarray*}
	x&=& (+1)~\Pi_{+\vert x} ~+~(-1)~\Pi_{-\vert x}~~~~\mbox{where}~x\in\{A_0,A_1\},\\
	y&=& (+1)~\Pi_{+\vert y} ~+~(-1)~\Pi_{-\vert y}~~~~\mbox{where}~y\in\{B_0,B_1\}.
\end{eqnarray*}}
These observables can be written in a Jordan canonical form derived in Ref.~\cite{Masanes2006} (also see~\cite{Masanes2005,Pironio2009}), which states that: for any set of four projection operators {\small $ \{\Pi_{+\vert \mathcal{M}_0},~\Pi_{-\vert \mathcal{M}_0},~\Pi_{+\vert \mathcal{M}_1},~\Pi_{-\vert \mathcal{M}_1}\}$} acting on a Hilbert space $\mathcal{H}$ and satisfying conditions {\small $\Pi_{+\vert \mathcal{M}_0} +\Pi_{-\vert \mathcal{M}_0}=I$} and {\small $\Pi_{+\vert \mathcal{M}_1} +\Pi_{-\vert \mathcal{M}_1}=I$}, there is an orthonormal basis of {\small $\mathcal{H}$} in which all the four operators are simultaneously block diagonal with maximum block size {\small $2\times 2$}. The orthonormal basis induces a direct sum decomposition {\small $\mathcal{H}=\oplus_k\mathcal{H}^k$} where dimension of each component subspace {\small $\mathcal{H}^k$} is at most two.
Then, the four projection operators can be decomposed as {\small$\Pi_{\pm\vert \mathcal{M}_{0(1)}}=\oplus_k~\Pi_{\pm\vert \mathcal{M}_{0(1)}}^k$} and each component {\small$\Pi_{\pm\vert \mathcal{M}_{0(1)}}^k$} acts on the subspace $\mathcal{H}^k$. 
The projector on subspace $\mathcal{H}^k$ can be written as {\small$\Pi^k=\Pi_{+\vert \mathcal{M}_{0}}^k+\Pi_{-\vert \mathcal{M}_{0}}^k=\Pi_{+\vert \mathcal{M}_{1}}^k+\Pi_{-\vert \mathcal{M}_{1}}^k$}. 
On applying the stated result to the observables {\small$A_0,A_1~(B_0,B_1)$} and state space {\small$\mathcal{H}_A ~\left(\mathcal{H}_B\right)$} of Alice (Bob) gives
{\small \begin{equation}
p(\!a,\!b\vert x,\!y\!)=\!\sum_{i,j} \mu_{ij}\mbox{Tr}(\rho_{ij}\Pi^i_{a\vert x}\!\! \otimes\! \Pi^j_{b\vert y})\equiv \!\sum_{i,j} \mu_{ij}p_{ij}(\!a,\!b\vert x,\!y\!), \label{eq7}
\end{equation}}
where {\small $\mu_{ij}\!=\!\mbox{Tr}(\rho\Pi^i\!\otimes\!\Pi^j)$} and satisfies {\small$\sum_{i,j}\mu_{ij}=1$} and {\small $\mu_{ij}\geq 0$}, and {\small$\rho_{ij}\!=\!(\Pi^i\!\otimes\!\Pi^j\rho\Pi^i\!\otimes\!\Pi^j)/\mu_{ij}$} is trace one positive operator on component subspace {\small $\mathcal{H}_A^i\otimes \mathcal{H}_B^j$}.

Now let us define from the probabilities in Eq.\eqref{eq4} a function {\small $\Omega(\mathbf{r},\mathbf{s})=\sum_{a,b,x,y} c_{abxy} ~p(a,b\vert x,y)+c_0$} where $c_{abxy}$ and $c_0$ are some real coefficients. On applying Eq.(\ref{eq7}) to each probability term in {\small $\Omega(\mathbf{r},\mathbf{s})$} we get
{\small\begin{align}
\Omega (\mathbf{r},\mathbf{s})= \sum_{i,j} \mu_{ij}~ \Omega (r_{ij}, ~s_{ij}), \label{eq8}
\end{align}}
where {\small $r_{ij}\!=\!p_{ij}(\!-1,\!-1\vert A_0,B_0\!)$} and {\small $s_{ij}\!=\!p_{ij}(\!-1,\!-1\vert A_1,B_0\!)$}. From Eq.(\ref{eq7}) one can find that $\mathbf{r}=\sum_{i,j} \mu_{ij}r_{ij}$ and $\mathbf{s}=\sum_{i,j} \mu_{ij}s_{ij}$. Furthermore, when the black-box statistics satisfies the zero constraints of Hardy's test, then the same is true in every {\small $\mathcal{H}_A^i\otimes \mathcal{H}_B^j$} subspace. Now we like to state and prove the following Lemma.
\begin{lem}
 Let {\small$\mathcal{E}(\mathbf{r},\!\mathbf{s})\!: \!(0,\!1)\!\times\!(0,\!1)\!\rightarrow \!\mathbb{R}$} be a concave cover of {\small$\Omega (\mathbf{r},\mathbf{s})$}, and suppose $\mathcal{R}$ is the set of points from the domain for which {\small$\mathcal{E}(\mathbf{r},\mathbf{s})=\Omega (\mathbf{r},\mathbf{s})$}. Then for all {\small$(\mathbf{r},\mathbf{s})\in \mathcal{R}$}, and for all {\small$(i,j)$},  {\small$r_{ij}=\mathbf{r}$} and {\small$s_{ij}=\mathbf{s}$}, provided {\small$\Omega(\mathbf{r},\mathbf{s})$} is a strictly concave function of {\small$\mathbf{r}$} and {\small$\mathbf{s}$} over the region {\small$\mathcal{R}$}. \label{Lemma1}
\end{lem}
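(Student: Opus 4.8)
The plan is to exploit the concave cover together with Jensen's inequality, using the two identities already in hand. First I would record the defining properties of the concave cover: $\mathcal{E}$ is concave on $(0,1)\times(0,1)$ and dominates $\Omega$ pointwise, $\Omega\le\mathcal{E}$. The decomposition machinery then supplies, from Eq.~(\ref{eq8}), the additive identity $\Omega(\mathbf{r},\mathbf{s})=\sum_{i,j}\mu_{ij}\,\Omega(r_{ij},s_{ij})$, while $(\mathbf{r},\mathbf{s})=\sum_{i,j}\mu_{ij}(r_{ij},s_{ij})$ realises $(\mathbf{r},\mathbf{s})$ as a convex combination of the points $(r_{ij},s_{ij})$ with weights $\mu_{ij}\ge0$ summing to one.

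Next I would chain the two facts through $\mathcal{E}$:
\[
\Omega(\mathbf{r},\mathbf{s})=\sum_{i,j}\mu_{ij}\,\Omega(r_{ij},s_{ij})\le\sum_{i,j}\mu_{ij}\,\mathcal{E}(r_{ij},s_{ij})\le\mathcal{E}\!\Big(\textstyle\sum_{i,j}\mu_{ij}(r_{ij},s_{ij})\Big)=\mathcal{E}(\mathbf{r},\mathbf{s}),
\]
the first step being $\Omega\le\mathcal{E}$ and the second Jensen's inequality for the concave $\mathcal{E}$. On $\mathcal{R}$ the two ends agree, so both inequalities saturate. Saturation of the first, since the weights are nonnegative and each term obeys $\Omega\le\mathcal{E}$, forces $\Omega(r_{ij},s_{ij})=\mathcal{E}(r_{ij},s_{ij})$ whenever $\mu_{ij}>0$; that is, every point of the decomposition already lies in $\mathcal{R}$. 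Saturation of the second is equality in Jensen for a concave function, which I would interpret as the statement that $\mathcal{E}$ is affine on the convex hull $K$ of the support points $\{(r_{ij},s_{ij}):\mu_{ij}>0\}$.

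The final step is to turn this rigidity into $r_{ij}=\mathbf{r}$, $s_{ij}=\mathbf{s}$, and this is where strict concavity is needed. Suppose, toward a contradiction, that two support points $P_1\ne P_2$ occur. Both lie in $\mathcal{R}$, and their midpoint $M$ sits on $[P_1,P_2]\subset K$, where $\mathcal{E}$ is affine, so $\mathcal{E}(M)=\tfrac12[\mathcal{E}(P_1)+\mathcal{E}(P_2)]=\tfrac12[\Omega(P_1)+\Omega(P_2)]$ after using $\mathcal{E}=\Omega$ at $P_1,P_2$. Strict concavity of $\Omega$ over $\mathcal{R}$ then yields $\Omega(M)>\tfrac12[\Omega(P_1)+\Omega(P_2)]=\mathcal{E}(M)$, contradicting $\Omega\le\mathcal{E}$. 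Hence the support is a single point, and as it averages to $(\mathbf{r},\mathbf{s})$ it equals it, giving the claim.

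I expect this last passage to be the crux. The difficulty is that $\Omega$ is not globally concave --- it is strictly concave only on $\mathcal{R}$, which need not be convex --- so one cannot apply the strict form of Jensen to $\Omega$ directly. Routing the argument through the genuinely concave $\mathcal{E}$ is precisely what confines the whole support to $\mathcal{R}$; only afterwards can strict concavity of $\Omega$ be played against the affine behaviour of $\mathcal{E}$ on $K$. I would also be careful that the midpoint comparison remains valid even if $M\notin\mathcal{R}$, which is why the strict-concavity hypothesis should be read as the midpoint inequality for pairs of points of $\mathcal{R}$, with $\Omega$ evaluated at the midpoint as a point of the ambient domain $(0,1)\times(0,1)$.
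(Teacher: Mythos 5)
Your proof is correct and follows the same overall strategy as the paper's: sandwich $\Omega(\mathbf{r},\mathbf{s})$ between $\sum_{i,j}\mu_{ij}\Omega(r_{ij},s_{ij})$ and $\mathcal{E}(\mathbf{r},\mathbf{s})$ using the concave cover and Jensen's inequality, then exploit saturation on $\mathcal{R}$. The difference lies in the closing step. The paper invokes the equality case of Jensen in the form ``for a nonlinear concave function, equality holds iff all points coincide'' and concludes directly; as stated that is not quite accurate, because a globally nonlinear concave function can still be affine on the convex hull of the support points --- which is precisely what saturation of Jensen actually delivers, no more. Your argument repairs this: you first extract from saturation of the inequality $\Omega\le\mathcal{E}$ termwise that every support point with $\mu_{ij}>0$ already lies in $\mathcal{R}$, and only then play the affineness of $\mathcal{E}$ on the hull against the strict concavity of $\Omega$ at the midpoint of two putative distinct support points, contradicting $\Omega\le\mathcal{E}$. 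This is where the strict-concavity hypothesis genuinely enters, and your reading of it as a midpoint inequality for pairs of points of $\mathcal{R}$, with $\Omega$ evaluated at the midpoint as a point of the ambient domain, is the right way to make the hypothesis do its work even when $\mathcal{R}$ is not convex. In short: same route as the paper, but your version of the final step is the one that actually closes the argument.
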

\begin{proof}
A \textit{concave cover} {\small$\mathcal{E}(\mathbf{r},\mathbf{s})$} for function {\small$\Omega(\mathbf{r},\mathbf{s})$} over its domain {\small$(0,1)\times(0,1)$} is defined as the lowest-valued concave function that overestimates or equals $\Omega(\mathbf{r},\mathbf{s})$ in its domain.
Now consider Jensen's inequality which states that for any concave real function {\small$f(x):\mathbb{R}^n\rightarrow \mathbb{R},~f(\sum_{k=1}^m p_k x_k) \geq \sum_{k=1}^m p_k f(x_k)$} where {\small$x_k\in\mathbb{R}^n$} and {\small$p_k\geq 0$} for all {\small$k$}, and {\small$\sum_kp_k=1$}. When {\small$f(x)$} is nonlinear, equality holds if and only if {\small$x_1=x_2=...=x_m$}. On applying Jensen's inequality to the function {\small $\mathcal{E}(\mathbf{r},\mathbf{s}):\mathbb{R}^2\rightarrow \mathbb{R}$} we get
{\small\begin{eqnarray}
\mathcal{E}(\mathbf{r},\mathbf{s})=\mathcal{E}(\sum_{i,j} \mu_{ij}~( r_{ij}, ~s_{ij}))
\geq \sum_{ij} \mu_{ij}~ \mathcal{E}(r_{ij},~ s_{ij}). \label{eq9}
\end{eqnarray}}
On the other hand, when {\small $(\mathbf{r},\mathbf{s}) \in \mathcal{R}$} we have {\small$\mathcal{E}(\mathbf{r},\mathbf{s}) =\Omega (\mathbf{r},\mathbf{s})$}, and since {\small$ \Omega(r_{ij},s_{ij})\leq\mathcal{E}(r_{ij},s_{ij})$}, using Eq.\eqref{eq10}, we get
{\small\begin{align}
\mathcal{E}(\mathbf{r},\mathbf{s}) =\sum_{i,j} \mu_{ij}~\Omega(r_{ij}, s_{ij}) 
 \leq \sum_{i,j} \mu_{ij}~\mathcal{E}(r_{ij}, s_{ij}).\label{eq10}
\end{align}}
Then, Eq.~(\ref{eq9}) and Eq.~(\ref{eq10}) imply that if $(\mathbf{r},\mathbf{s})\in \mathcal{R}$
{\small\begin{align}
\mathcal{E}(\sum_{i,j} \mu_{ij}( r_{ij}, ~s_{ij})) = \sum_{ij} \mu_{ij}~ \mathcal{E}(r_{ij}, s_{ij}). \label{eq11}
\end{align}}
Thus in Jensen's inequality~(\ref{eq9}), equality holds in the region $\mathcal{R}$. Therefore, if $\Omega(\mathbf{r},\mathbf{s})$ is a strictly concave function of $\mathbf{r}$ and $\mathbf{s}$ over the region $\mathcal{R}$, in every {\small$\mathcal{H}_A^i\otimes \mathcal{H}_B^j$} subspace value of the parameters  $(r_{ij},s_{ij})$ must be same, i.e., $r_{ij}=\mathbf{r}$ and $s_{ij}=\mathbf{s}$ for all $i,j$. 
\end{proof}
\begin{figure}[t!]
	\begin{center}
		\includegraphics[angle=0, width=0.3\textwidth]{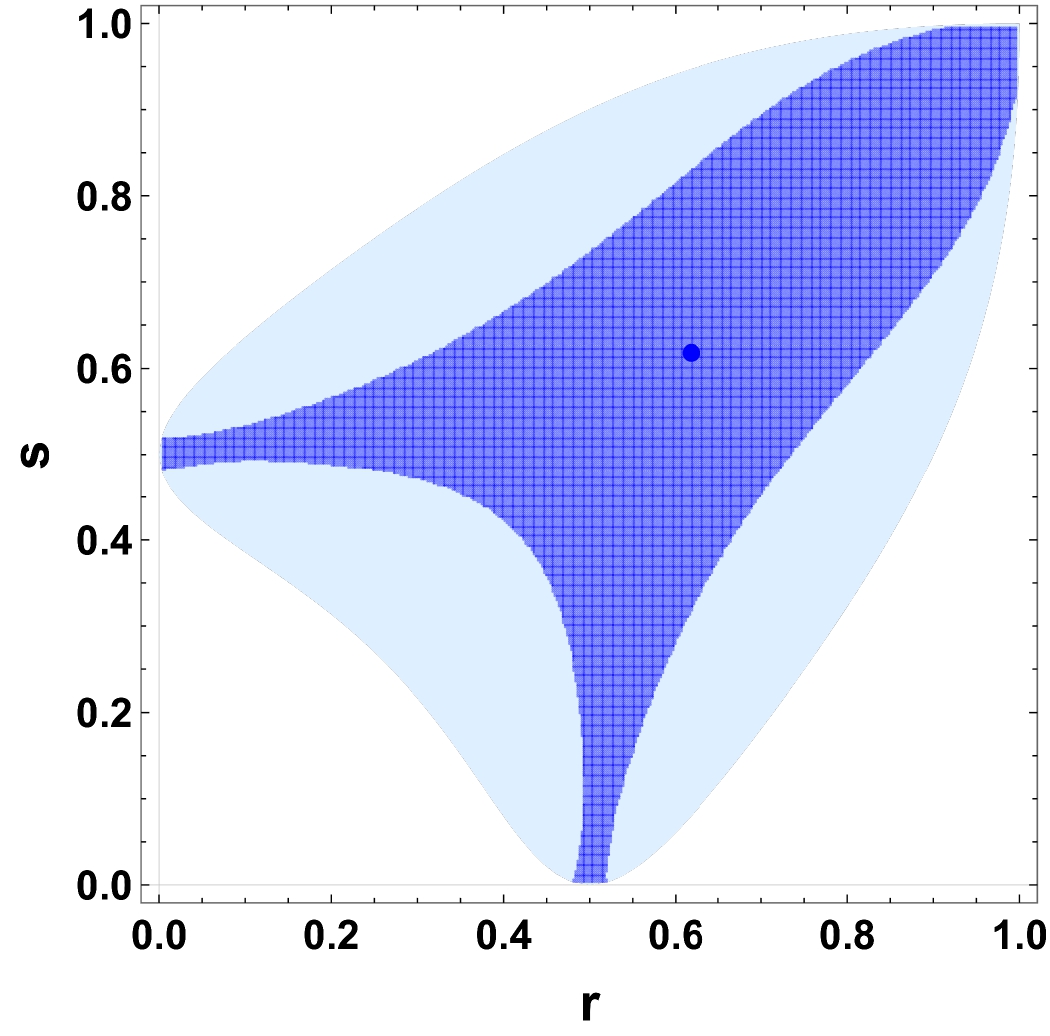}
		\caption{ In the light (blue) shaded region, {\small$\Omega^{\ast}(\mathbf{r},\mathbf{s})$} is concave. {\small$\mathcal{R}^{\ast}(\mathbf{r},\mathbf{s})$} is the dark (blue) shaded region,  where the concave cover {\small$\mathcal{E}^{\ast}(\mathbf{r},\mathbf{s})=\Omega^{\ast}(\mathbf{r},\mathbf{s})$. } The dark (blue) dot shows point $(\mathbf{r},\mathbf{s})=(\frac{\sqrt{5}-1}{2} ,\frac{\sqrt{5}-1}{2})$ where $\Omega^{\ast}(\mathbf{r},\mathbf{s})$ is maximum.}  \label{Fig1}
	\end{center}	
\end{figure}
Let us now give an example where Lemma \ref{Lemma1} is applicable. Consider a function defined by the probability which quantifies success in Hardy's test, i.e.,
{\small\begin{align}
\Omega^{\ast}(\mathbf{r},\mathbf{s})\equiv p(+1,+1\vert A_0,B_0)=\frac{\mathbf{r}(1\!-\!\mathbf{r})\mathbf{s}(1\!-\!\mathbf{s})}{1\!-\!\mathbf{r} \mathbf{s}}.
\label{eq12}
\end{align}}
The function {\small$\Omega^{\ast}(\mathbf{r},\mathbf{s})$} is a nonlinear function and it is concave in a part of its domain. Further, for the considered function, there exist a concave cover {\small $\mathcal{E}^{\ast}(\mathbf{r},\mathbf{s})$}, and a region {\small $\mathcal{R}^{\ast}\subset (0,1)\times(0,1)$} where {\small$\mathcal{E}^{\ast}(\mathbf{r},\mathbf{s})=\Omega^{\ast}(\mathbf{r},\mathbf{s})$}. 
We constructed the cover function numerically and then find the region $\mathcal{R}^{\ast}$ as shown in Fig. \ref{Fig1} (see Appendix~(A) for the details of the computational method). 
From the considered example and application of Lemma~\ref{Lemma1}, we conclude that if probability distribution of the black-box is such that $(\mathbf{r},\mathbf{s})\in \mathcal{R}^{\ast}$ then in every subspace  {\small$\mathcal{H}_A^i\otimes \mathcal{H}_B^j$, $r_{ij}=\mathbf{r}$ and $s_{ij}=\mathbf{s}$}. We now ask whether such a property can hold for all $(\mathbf{r},\mathbf{s})\in (0,1)\times (0,1)$. In other words, is it possible to vary over the possible choice of functions $\Omega(\mathbf{r},\mathbf{s})$ such that union of all the resulting regions $\mathcal{R}$ covers the full parameter space {\small$(0,1)\times (0,1)$}? Such a property will free the black box parameters $\mathbf{r}$ and $\mathbf{s}$ from any restrictions and lead to an interesting extension of Lemma~\ref{Lemma1}. We find answer to the question in affirmative through the following lemma.
\begin{lem}
 If the black-box statistics in the Bell-experiment is of the form $\vec{\mathscr{P}}_{\tiny \mbox{Hardy}}(\mathbf{r},\mathbf{s})$ as in Eq.~(\ref{eq4}), then for all $(\mathbf{r},\mathbf{s})\in(0,1)\times (0,1)$ and in all $\mathcal{H}_A^i\otimes \mathcal{H}_B^j$ subspace, $r_{ij}=\mathbf{r}$ and $s_{ij}=\mathbf{s}$. \label{Lemma2}
 \end{lem}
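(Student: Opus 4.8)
The plan is to carry out the program suggested just before the statement: to exhibit a family of admissible functions $\Omega$ whose contact regions $\mathcal{R}$ jointly exhaust the open square $(0,1)\times(0,1)$, and then to invoke Lemma~\ref{Lemma1} at each point separately. The starting observation is that \emph{any} $\Omega(\mathbf{r},\mathbf{s})=\sum_{a,b,x,y}c_{abxy}\,p(a,b\vert x,y)+c_0$ automatically obeys the decomposition $\Omega(\mathbf{r},\mathbf{s})=\sum_{i,j}\mu_{ij}\,\Omega(r_{ij},s_{ij})$ of Eq.~\eqref{eq8}: the constant $c_0$ is invariant under the convex average because $\sum_{i,j}\mu_{ij}=1$, and once the statistics are of the form~\eqref{eq4} each probability is a fixed function of $(\mathbf{r},\mathbf{s})$. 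Hence it suffices to produce, for every target point $(\mathbf{r}_0,\mathbf{s}_0)$ in the square, one such $\Omega$ that is strictly concave near $(\mathbf{r}_0,\mathbf{s}_0)$ and has $(\mathbf{r}_0,\mathbf{s}_0)$ in its contact set $\mathcal{R}$.

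The central tool is a \emph{tilting} operation. Since $\mathbf{r}=p(-1,-1\vert A_0,B_0)$ and $\mathbf{s}=p(-1,-1\vert A_1,B_0)$ are themselves box probabilities, the map $\Omega\mapsto\Omega+\lambda\mathbf{r}+\kappa\mathbf{s}$ is again admissible for any $\lambda,\kappa$, and being linear it leaves the Hessian of $\Omega$ unchanged. First I would take the base function $\Omega^{\ast}$ of Eq.~\eqref{eq12} and set $(\lambda,\kappa)=-\nabla\Omega^{\ast}(\mathbf{r}_0,\mathbf{s}_0)$, which makes $(\mathbf{r}_0,\mathbf{s}_0)$ a critical point of the tilted function. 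Wherever the (unchanged) Hessian of $\Omega^{\ast}$ is negative definite --- the light-shaded region of Fig.~\ref{Fig1} --- this critical point is a strict local maximum; if it is moreover the \emph{global} maximum then, since the constant function equal to that maximum is a concave majorant, the concave cover must agree with the tilted $\Omega$ there, so $(\mathbf{r}_0,\mathbf{s}_0)\in\mathcal{R}$ and Lemma~\ref{Lemma1} applies. Thus tilting promotes the single point of maximal violation of Fig.~\ref{Fig1} to \emph{every} point of the strict-concavity region of $\Omega^{\ast}$.

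To reach the points where $\Omega^{\ast}$ fails to be concave I would enlarge the family using the remaining product-type entries of the array~\eqref{eq4} as further base functions --- for instance $\frac{(1-\mathbf{r})^2\mathbf{s}}{1-\mathbf{r}\mathbf{s}}$, $\frac{\mathbf{r}(1-\mathbf{s})^2}{1-\mathbf{r}\mathbf{s}}$, and $\frac{(1-\mathbf{r})(1-\mathbf{s})}{1-\mathbf{r}\mathbf{s}}$, together with their images under the discrete symmetries of Hardy's configuration (exchange of the two parties and flips of the outcome labels) --- each strictly concave on a different part of the square. The remaining task is then a Hessian sign-analysis showing that the union of these strict-concavity regions is all of $(0,1)\times(0,1)$; applying the tilt to whichever base function is concave at a given $(\mathbf{r}_0,\mathbf{s}_0)$ places that point in the corresponding $\mathcal{R}$, and assembling the pointwise conclusions of Lemma~\ref{Lemma1} yields $r_{ij}=\mathbf{r}$ and $s_{ij}=\mathbf{s}$ for all $(i,j)$ throughout the domain.

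I expect the main obstacle to be the \emph{global} nature of the contact-set condition rather than any local computation. Negative-definiteness of the Hessian at a point is routine, but membership of $(\mathbf{r}_0,\mathbf{s}_0)$ in $\mathcal{R}$ requires that the tilted function not be overtaken by a chord drawn from the non-concave part of the domain, i.e. that the critical point really be a global and not merely a local maximum. Controlling this, together with checking that the strict-concavity regions of the chosen base functions cover the open square with no gaps --- in particular near its boundary, where several entries of~\eqref{eq4} degenerate and the behavior ceases to be nonlocal --- is where the genuine work lies, and it is the step most likely to require the numerical concave-cover construction already employed for Fig.~\ref{Fig1}.
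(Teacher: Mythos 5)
There is a genuine gap, and it sits at the heart of your mechanism. Your tilting operation adds a function $\lambda\mathbf{r}+\kappa\mathbf{s}$ that is \emph{affine in the variables} $(\mathbf{r},\mathbf{s})$, and the concave cover commutes with adding affine functions: if $\mathcal{E}$ is the lowest concave majorant of $\Omega$ and $L$ is affine, then $\mathcal{E}+L$ is the lowest concave majorant of $\Omega+L$ (any concave majorant $G$ of $\Omega+L$ gives the concave majorant $G-L$ of $\Omega$, and conversely). Consequently the contact set of the tilted function is \emph{identical} to $\mathcal{R}^{\ast}$ --- linear tilting cannot promote a single point into $\mathcal{R}$, and the claim that it ``promotes the single point of maximal violation to every point of the strict-concavity region'' is false (indeed $\mathcal{R}^{\ast}$ in Fig.~\ref{Fig1} is already a two-dimensional region, strictly smaller than the concavity region, and it does not grow under your tilts). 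The obstacle you flag --- whether the tilted critical point is a global rather than local maximum --- is therefore not a technical loose end but the place where the mechanism simply does not work: for points of the concavity region outside $\mathcal{R}^{\ast}$, a chord from the non-concave part of the domain beats the function both before and after tilting. The paper's proof avoids this precisely by tilting with the marginal $p(\pm1\vert A_0)=\mathbf{s}(1-\mathbf{r})$, which is linear in the box probabilities but \emph{bilinear, hence non-affine, in} $(\mathbf{r},\mathbf{s})$; its Hessian contribution $\mp(2\nu-1)$ on the off-diagonal genuinely deforms the concavity structure, so the contact regions $\mathcal{R}_{\nu}$ of the family $\Omega_{\nu}$ in Eq.~(\ref{eq13}) really do move as $\nu$ varies, and their union is then shown (numerically, for $\nu=d/N$ with $N$ large) to exhaust $(0,1)\times(0,1)$.

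Your fallback --- switching to other entries of the array~(\ref{eq4}) as base functions and arguing their strict-concavity regions cover the square --- is also not substantiated and looks doubtful on inspection: e.g.\ $(1-\mathbf{r})(1-\mathbf{s})/(1-\mathbf{r}\mathbf{s})$ is convex along the diagonal $\mathbf{r}=\mathbf{s}$, and $(1-\mathbf{s})/(1-\mathbf{r}\mathbf{s})$ has an indefinite Hessian throughout the open square, so neither contributes a strict-concavity region where you need one. In short, the correct degrees of freedom to vary are linear combinations of the \emph{probabilities} (which are non-affine functions of $(\mathbf{r},\mathbf{s})$), not affine functions of $(\mathbf{r},\mathbf{s})$ themselves; once that substitution is made your program essentially becomes the paper's proof of Lemma~\ref{Lemma2}, including its reliance on a numerical concave-cover computation to certify that $\cup_{\nu}\mathcal{R}_{\nu}$ covers the whole domain.
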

\begin{proof}
First we recall that Lemma \ref{Lemma1} is applicable to any function {\small $\Omega(\mathbf{r},\mathbf{s})$} defined as some linear combination of all the probabilities in Eq~(\ref{eq4}) plus a constant term. We find that, a proof of Lemma~\ref{Lemma2} follows on considering simply a single parameter family of functions
{\small \begin{align}
\Omega_{\nu}(\mathbf{r},\!\mathbf{s}) \!&=\!\Omega^{\ast}(\mathbf{r},\!\mathbf{s}) \!+\nu~\!p(\!+\vert A_0\!) \!+(1\!-\!\nu)~\!p(\!-\vert A_0\!)\!-\!1/2,\nonumber\\
	\!&=\!\Omega^{\ast}(\mathbf{r},\!\mathbf{s})\!+\nu~\!(\mathbf{s}\!-\!\mathbf{r}\mathbf{s}) \!+(1\!-\!\nu)~\!(1\!-\!\mathbf{s}\!+\!\mathbf{r}\mathbf{s})\!-\!1/2, \label{eq13}
\end{align}}
where {\small $0\leq \nu \leq 1$}, {\small $p(\!+\vert \!A_0\!)\!=\!p(\!+,\!+\vert A_0,\!B_0\!)+p(\!+,\!-\vert A_0,\!B_0\!)$, and $p(\!-\vert A_0\!)=1-p(\!+\vert A_0\!)$}. Note that at {\small$\nu =1/2$}, {\small$\Omega_{1/2}(\mathbf{r},\mathbf{s})\equiv \Omega^{\ast}(\mathbf{r},\mathbf{s})$} and {\small $\mathcal{R}_{1/2}\equiv \mathcal{R}^{\ast}$}. 
For the family of functions {\small$\Omega_{\nu}(\mathbf{r},\mathbf{s})$} on considering {\small$\nu \in \{d/N:~d\in\{1,2,...,N-1\} \}$}, we find that for a sufficiently large value of {\small$N$}, {\small$\cup_{\nu} \mathcal{R}_{\nu}= (0,1)\times (0,1)$}. To observe the proof, first let us see  Fig.~\ref{Fig2} where the region {\small$\cup_{\nu} \mathcal{R}_{\nu}$} is shown when for {\small$N=10$}, i.e., when the parameter $\nu$ takes values from the set $\{0.1,~0.2,...,0.9\}$. Note that the region where {\small$r_{ij}=\mathbf{r}$} and {\small$s_{ij}=\mathbf{s}$}, in all {\small$\mathcal{H}_A^i\otimes \mathcal{H}_B^j$} subspace, is expanded. A more detailed exposition of the proof which follows on increasing the value of {\small$N$} is provided in the Appendix~(B).
\end{proof}
Finally, in the following we prove the main result of the paper by using Lemma~\ref{Lemma2}.
\begin{figure}[t!]
	\begin{center}
		\includegraphics[angle=0, width=0.3\textwidth]{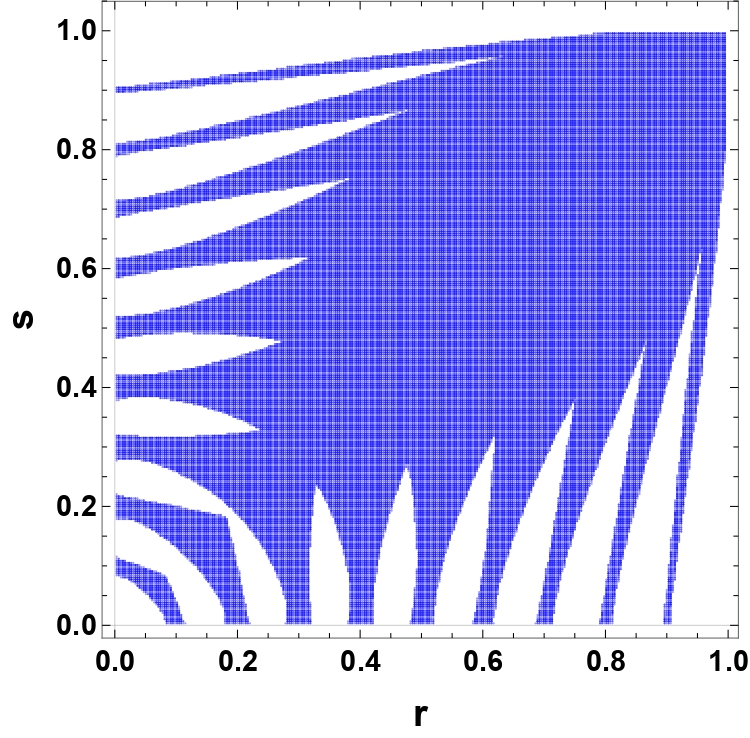}
		\caption{ The dark (blue) shaded region {\small$\cup_{\nu} \mathcal{R}_{\nu}$} is shown when the parameter {\small$\nu$} takes values from the set {\small$\{0.1,~0.2,...,~0.9\}$}. If {\small$(\mathbf{r},\mathbf{s})$} belongs to the shaded region, then {\small$(r_{ij},s_{ij})=(\mathbf{r},\mathbf{s})$} in every {\small$\mathcal{H}_A^i\otimes \mathcal{H}_B^j$} subspace. When {\small$\cup_{\nu} \mathcal{R}_{\nu}$} is derived from {\small$\{\nu=k/N:~k\in\{1,2,...,N-1\}\}$}, as {\small$N$} becomes sufficiently large the shaded region covers all {\small$(\mathbf{r},\mathbf{s}) \in (0,1)\times (0,1)$}.}  \label{Fig2}
	\end{center}	
\end{figure}
\begin{thm*}
In a black-box Bell experiment, if a behavior $\vec{\mathscr{P}}_{\tiny \mbox{Hardy}}(\mathbf{r},\mathbf{s})$ of the form given in Eq.~(\ref{eq4}) is observed, then the state of unknown system $\rho^{AB}$ is equivalent up to local isometries to $ \zeta^{AB}\otimes\vert\psi(\mathbf{r},\mathbf{s})\rangle_{\tiny \mbox{Hardy}}^{A'B'}\langle\psi(\mathbf{r},\mathbf{s})\vert$, where $\vert\psi(\mathbf{r},\mathbf{s})\rangle^{A'B'}_{\tiny \mbox{Hardy}}$ is the pure two qubit Hardy state given by the Eq.~(\ref{eq6}) and $\zeta^{AB}$ is some arbitrary bipartite state.
\end{thm*}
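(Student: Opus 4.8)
The plan is to combine the block-diagonalisation already set up in Eqs.~(\ref{eq7})--(\ref{eq8}) with the uniqueness of the two-qubit Hardy realisation from Sec.~III, and then to assemble the pieces into a single local isometry. Building on the reduction to projective measurements (Neumark) and the Jordan form of Ref.~\cite{Masanes2006} applied to Alice's pair $A_0,A_1$ and Bob's pair $B_0,B_1$, I start from the direct-sum decompositions $\mathcal{H}_A=\oplus_i\mathcal{H}_A^i$ and $\mathcal{H}_B=\oplus_j\mathcal{H}_B^j$ into subspaces of dimension at most two, together with the convex decomposition of the observed behaviour in Eq.~(\ref{eq7}). Since the observed statistics is exactly $\vec{\mathscr{P}}_{\tiny\mbox{Hardy}}(\mathbf{r},\mathbf{s})$, the three zero constraints Eqs.~(\ref{eq1b})--(\ref{eq1d}) hold globally and hence, because the measurement operators are block diagonal, in every subspace $\mathcal{H}_A^i\otimes\mathcal{H}_B^j$ separately. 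Lemma~\ref{Lemma2} then supplies the decisive extra information: $r_{ij}=\mathbf{r}$ and $s_{ij}=\mathbf{s}$ for every $(i,j)$, so the block parameters do not merely convex-average to $(\mathbf{r},\mathbf{s})$ but coincide with it in each block.

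Next I would analyse a single block $(i,j)$ with $\mu_{ij}>0$. On a qubit--qubit subspace the two local observables are generic projective measurements, so by local unitaries $U_A^i\otimes U_B^j$ I can bring them to the canonical form of Eq.~(\ref{eq2}) with angles $(\alpha,\beta)$ fixed through Eq.~(\ref{eq5}) by the common values $(\mathbf{r},\mathbf{s})$. The three zeros force the support of $\rho_{ij}$ to be orthogonal to the product vectors $\ket{0}\!\otimes\!\ket{v_1}$, $\ket{u_1}\!\otimes\!\ket{0}$ and $\ket{u_0}\!\otimes\!\ket{v_0}$; for nondegenerate measurements these span a three-dimensional space, so the support is one-dimensional and $\rho_{ij}$ is necessarily the pure state of Eq.~(\ref{eq3}), i.e. $\rho_{ij}=(U_A^i)^{\dagger}\otimes(U_B^j)^{\dagger}\,\proj{\psi(\mathbf{r},\mathbf{s})}\,U_A^i\otimes U_B^j$ up to the free local phases $\phi,\xi$, which are absorbed into $U_A^i$ and $U_B^j$ respectively. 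The remaining degenerate blocks, where one observable acts as $\pm I$ or the two commute, cannot reproduce the genuinely nonlocal values $(\mathbf{r},\mathbf{s})\in(0,1)^2$ together with the zeros, and I would argue these carry weight $\mu_{ij}=0$. Consequently every contributing block realises the complete behaviour $\vec{\mathscr{P}}_{\tiny\mbox{Hardy}}(\mathbf{r},\mathbf{s})$ of Eq.~(\ref{eq4}) through one and the same two-qubit Hardy state of Eq.~(\ref{eq6}).

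Finally I would construct the local isometry. Define $V_A=\sum_i\ket{i}\otimes U_A^i$ and $V_B=\sum_j\ket{j}\otimes U_B^j$, mapping $\mathcal{H}_A$ ($\mathcal{H}_B$) into a label register $\bar A$ ($\bar B$) tensored with an extracted qubit $A'$ ($B'$); these are genuine local isometries precisely because the Jordan blocks are mutually orthogonal. Acting with $V_A\otimes V_B$, each diagonal block is sent to $\mu_{ij}\,\proj{ij}\otimes\proj{\psi(\mathbf{r},\mathbf{s})}$, the first factor living on $\bar A\bar B$ and the second on $A'B'$. The only nontrivial point is the off-diagonal blocks $\Pi^i\!\otimes\Pi^j\,\rho\,\Pi^{i'}\!\otimes\Pi^{j'}$: positivity of $\rho$ together with the rank-one-ness of every diagonal block forces each off-diagonal block to equal $c_{(ij)(i'j')}\,\ket{\psi_{ij}}\bra{\psi_{i'j'}}$, where $\ket{\psi_{ij}}$ is the embedded Hardy state, so after the isometry it too becomes $c_{(ij)(i'j')}\,\ket{ij}\bra{i'j'}\otimes\proj{\psi(\mathbf{r},\mathbf{s})}$. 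Summing over all blocks factorises the image as $\zeta^{\bar A\bar B}\otimes\proj{\psi(\mathbf{r},\mathbf{s})}$ with $\zeta=\sum_{(ij),(i'j')} c_{(ij)(i'j')}\,\ket{ij}\bra{i'j'}$ a legitimate (positive, unit-trace) junk state, which is the claim.

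I expect the main obstacle to be the per-block rigidity step: showing that the single pair of numbers $(r_{ij},s_{ij})=(\mathbf{r},\mathbf{s})$ together with the three zeros forces the complete qubit realisation (rather than merely constraining two of the sixteen probabilities), and carefully disposing of the degenerate Jordan blocks. The positivity argument that propagates purity from the diagonal to the off-diagonal blocks, and the phase bookkeeping that keeps $V_A$ and $V_B$ strictly local, are further technical but essentially routine points.
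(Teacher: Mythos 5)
Your proposal is correct and follows essentially the same route as the paper: Jordan block-diagonalisation, Lemma~\ref{Lemma2} to pin $(r_{ij},s_{ij})=(\mathbf{r},\mathbf{s})$ in every block, the three-orthogonality argument forcing each $\rho_{ij}$ to be the unique two-qubit Hardy state of Eq.~(\ref{eq3}), and an explicit block-wise local isometry extracting $\vert\psi(\mathbf{r},\mathbf{s})\rangle_{\tiny \mbox{Hardy}}$. The only (harmless) difference is bookkeeping: the paper passes to a purification $\vert\chi\rangle^{AB}$ and writes the state as a direct sum $\bigoplus_{i,j}\sqrt{\mu_{ij}}\,\vert\psi_{\tiny \mbox{Hardy}}\rangle_{ij}$, whereas you work with the mixed $\rho$ and control the off-diagonal blocks via positivity --- arguably a slightly more explicit treatment of the same point.
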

\begin{proof}
Consider a purification $\vert\chi\rangle^{ABP}$ of the unknown state $\rho^{AB}$, in brief we denote such a purification simply as $\vert\chi\rangle^{AB}$ since the desired local isometry is not required to act on the purification space~\cite{SupicBowles2020}. Also observables $A_0,A_1,B_0,B_1$ can be considered in a $2\times2$ block diagonal form~\cite{SupicBowles2020}. Then, there is a basis in which the observables are in the following block diagonal form
{\small\begin{align}
\Pi^i_{+\vert A_0}&=\vert2i\rangle\langle2i\vert,~~~~~~\Pi^i_{-\vert A_0}=\vert2i+1\rangle\langle2i+1\vert;\nonumber\\
\Pi^i_{+\vert A_1}&=\vert u_{2i}\rangle\langle u_{2i}\vert,~~~
\Pi^i_{-\vert A_1}=\vert u_{2i+1}\rangle\langle u_{2i+1}\vert;\nonumber\\
\Pi^j_{+\vert B_0}&=\vert2j\rangle\langle2j\vert,~~~~~~ \Pi^j_{-\vert B_0}=\vert2j+1\rangle\langle2j+1\vert;\nonumber\\
\Pi^j_{+\vert B_1}&=\vert v_{2j}\rangle\langle v_{2j}\vert,~~~ \Pi^j_{-\vert B_1}=\vert v_{2j+1}\rangle\langle v_{2j+1}\vert; 
\label{eq14}
\end{align}}
where,
{\small \begin{align}
\vert u_{2i}\!\rangle\!&=\!C_{\alpha_i}\vert 2i\rangle \!+ \!e^{\mathtt{i} \phi}S_{\alpha_i}\vert 2i\!+\!1\rangle,~
\vert u_{2i+1}\!\rangle\!=\!-S_{\alpha_i}\vert 2i\rangle \!+\! e^{\mathtt{i} \phi}C_{\alpha_i}\vert 2i\!+\!1\rangle; \nonumber\\
\vert v_{2j}\!\rangle\!&=\!C_{\beta_j}\vert 2j\rangle \!+\! e^{\mathtt{i} \xi}S_{\beta_j}\vert 2j\!+\!1\rangle,~
\vert v_{2j+1}\!\rangle\!=\!-S_{\beta_j}\vert 2j\rangle \!+\! e^{\mathtt{i} \xi}C_{\beta_j}\vert 2j\!+\!1\rangle, \nonumber
\end{align}}
with $i,j\in \{0,1,2,...\}$, $\alpha_i,\beta_j\in[0, \pi]$ and $\phi, \xi \in [0, 2\pi)$. 
Note that, without loss of generality, the phases $\xi$ and $\phi$ can be considered independent of indices $i$ and $j$ in every {\small $\mathcal{H}_A^i\otimes \mathcal{H}_B^j$} subspace since this can be achieved by choosing a suitable measurement basis and rotating the state by local unitaries such that the probabilities $p_{ij}(a,b\vert x,y)$ remain invariant. Also note that in the subspace {\small$\mathcal{H}_A^i\otimes \mathcal{H}_B^j$ parameters $r_{ij}$ and $s_{ij}$} are given by {\small$r_{ij}=1-S^2_{\alpha_i}S^2_{\beta_j},~~s_{ij}=r^{-1}_{ij}C^2_{\alpha_i}$}. 
Now suppose $\vec{\mathscr{P}}_{\tiny \mbox{Hardy}}(\mathbf{r},\mathbf{s})$ is observed in the black-box experiment, then Lemma~\ref{Lemma2} implies that in each subspace {\small$\mathcal{H}_A^i\otimes \mathcal{H}_B^j $}, $r_{ij}=\mathbf{r}$ and $s_{ij}=\mathbf{s}$ and, therefore, $\vec{\mathscr{P}}_{\tiny \mbox{Hardy}}(r_{ij},s_{ij})=\vec{\mathscr{P}}_{\tiny \mbox{Hardy}}(\mathbf{r},\mathbf{s})$. This can be true if and only if in {\small$\mathcal{H}_A^i\otimes \mathcal{H}_B^j$} subspace, $\rho_{ij}=\vert\psi_{\tiny \mbox{Hardy}}\rangle_{ij}\langle \psi_{\tiny \mbox{Hardy}}\vert$, where
{\small \begin{align}
&\vert\!\psi_{\tiny \mbox{Hardy}}\!\rangle_{ij}\!=\!-\sqrt{\!\frac{(1\!-\!\mathbf{r}) \mathbf{r} (1\!-\!\mathbf{s}) \mathbf{s}}{1\!-\!\mathbf{r} \mathbf{s}}}~\vert2i,2j\rangle\!-\!e^{\iota\xi}\!\sqrt{\!\frac{(1\!-\!\mathbf{r})^2 \mathbf{s}}{1\!-\!\mathbf{r} \mathbf{s}}}~\vert 2i,2j\!+\!1\rangle \nonumber\\
&-e^{\iota\phi}\!\sqrt{\!(1\!-\!\mathbf{r}) (1\!-\!\mathbf{s})}~\vert 2i\!+\!1,2j\rangle\!+\!e^{\iota(\xi+\phi)}\!\sqrt{\!\mathbf{r}}~\vert 2i\!+\!1,2j\!+\!1\rangle, 
\label{eq15}
\end{align}}
Hence, the unknown state {\small$\vert\chi \rangle^{AB}$} can only be a direct sum of the form {\small$\vert\chi \rangle^{AB}=\bigoplus_{i,j}\sqrt{\mu_{ij}}~\vert\psi_{\tiny \mbox{Hardy}}\rangle_{ij}$}. Finally, we can give local isometries $\Phi^A$ and $\Phi^B$ such that
{\small\begin{equation*}
	(\Phi^A\otimes\Phi^B)\vert\chi\rangle^{AB}\vert 00\rangle^{A'B'}=\zeta^{AB}\otimes\vert\psi(\mathbf{r},\mathbf{s})\rangle_{\tiny \mbox{Hardy}}^{A'B'}\langle\psi(\mathbf{r},\mathbf{s})\vert,
\end{equation*}}
where components of the $\vert 00\rangle^{A'B'}$ are local ancilla qubits appended to the unknown state $\vert\chi\rangle^{AB}$, and after application of the local isometry $\Phi^A\otimes\Phi^B$ we want to get the target state $\vert\psi(\mathbf{r},\mathbf{s})\rangle_{\tiny \mbox{Hardy}}^{A'B'}$ along with some arbitrary bipartite state $ \zeta^{AB}$. 
The isometry map $\Phi^A=\Phi^B=\Phi$ with $\Phi~\vert 2k,0\rangle^{XX'} \mapsto \vert 2k,0\rangle^{XX'}$ and $ \Phi~\vert 2k+1,0\rangle^{XX'} \mapsto \vert 2k,1\rangle^{XX'}$, where $XX'\in\{AA',BB'\}$, has the desired property. 
This concludes our proof.
\end{proof}
The Theorem implies that any behavior in the form of Eq.~(\ref{eq4}) are extremal points of the set of quantum behaviors and hence it determines a part of the quantum boundary. Further, one can check that all the behaviors that are Hardy nonlocal but not in the form given by Eq.~(\ref{eq4}) are in the interior of the quantum set and they must arise on measuring mixed entangled states in higher than two qubit dimensions. 

\section{Conclusion}
Hardy's test of nonlocality, viewed differently, is a form of tailored Bell inequality~\cite{Acin2012, Alexia2017}. However, in distinction to, for instance, \textit{tilted}-Clauser-Horne-Shimony-Holt inequality~\cite{Acin2012}, Hardy's test places certain constraints on some outcome probabilities. Importance of tailored Bell inequalities is starkly revealed through various protocols for self-testing, randomness generation, quantum key distribution, etc., wherever the knowledge of the geometry of the quantum set of correlations plays a crucial role~\cite{Gohetal2018}. In this sense tailoring by constraining outcome probabilities can possibly play an important role, as we have demonstrated in this work through the Hardy's test. Here we note that for self-testing two qubit states and measurements on them, tilted-CHSH inequality is more powerful than Hardy's test, however, Hardy's correlations can self-test almost all two qubit states from the whole spectrum of violations, ranging from arbitrarily small to maximal violation in the Hardy's test of nonlocality. The fact that our approach can have broader applications can be seen by noticing the three key features in the derivation of our results: (i) due to the constraints on probabilities a Hardy nonlocal state, with local state space $\mathbb{C}^2$, is pure and unique for an arbitrarily fixed measurement~\cite{GKar1997,Jordan1994}, (ii) existence of a simple Jordan canonical form for two dichotomic observables for each party~\cite{Masanes2006,Masanes2005,Pironio2009}, and (iii) possibility of finding a concave cover to an arbitrary linear combination of outcome probabilities and application of Jensen's inequality. Thus, there can be tailored Bell tests other than Hardy's where the three properties may hold. In general, these three features may be found in any Bell scenario with $n$-parties, two measurements per party, and two outcomes to every measurement.  

To summarize, in this work, we have presented a method for self-testing quantum states by using the concave cover of a linear combination of observed outcome probabilities. We showcased the applicability of this approach on Hardy's test of nonlocality, leading to a two-parametric set of characterizable states with a broad spectrum of entanglement properties. This technique allows a full (up to local isometries) specification of the source state, even without maximal violation in Hardy's test. Our results show many potential development paths as the method introduced in this work can be possibly applied to different kind of Bell tests, allowing better specification of source states in a device-independent fashion. At the same time, it provides a very powerful tool for characterizing the boundary of the quantum set of correlations.

\begin{acknowledgments}
AR is supported by the National Research Foundation of Korea (NRF-2021R1A2C2006309), an Institute of Information and Communications Technology Promotion (IITP) grant funded by the Korean government (MSIP) (Grant No. 2019-0-00831) and the University IT Research Center (ITRC) Program (IITP-2021-2018-0-01402). AR, M. Pivoluska, and M. Plesch acknowledge funding and support from VEGA Project~No.~2/0136/19. M. Pivoluska, and M. Plesch additionally acknowledge GAMU project MUNI/G/1596/2019.  M.B. acknowledges funding from the
National Mission in Interdisciplinary Cyber-Physical systems from the Department of Science and Technology through
the I-HUB Quantum Technology Foundation (Grant No. I-HUB/PDF/2021-22/008), support through the research grant of INSPIRE Faculty fellowship from the Department of Science and Technology, Government of India, and the start-up research grant from SERB, Department of Science and Technology (Grant No. SRG/2021/000267).

AR thankfully acknowledge fruitful discussions and feedback from  Prof. M.J.W. Hall, Prof. Joonwoo Bae, Prof. Guruprasad Kar, and Dr. Ramij Rahaman at different stages of this work.

\end{acknowledgments}


\begin{widetext}

\section*{Appendix~(A): Computation of Concave cover for functions $\boldmath{\Omega(\mathbf{r},\mathbf{s})}$} \label{apnA}
Let us describe our method for computing the concave covers to the class of functions $\Omega(\mathbf{r},\mathbf{s})$ introduced in  Lemma-1 of the main text. Consider the first concrete example to which application of the Lemma-1 gives interesting insight,
\begin{align}
	\Omega^{\ast}(\mathbf{r},\mathbf{s})=\frac{\mathbf{r}(1\!-\!\mathbf{r})\mathbf{s}(1\!-\!\mathbf{s})}{1\!-\!\mathbf{r} \mathbf{s}}.
\end{align}
The three dimensional plot of $	\Omega^{\ast}(\mathbf{r},\mathbf{s})$ is shown in Fig.(\ref{SupFig1a}) which indicates that the function can be concave in some parts of its domain. On computing the two eigenvalues of Hessian matrix of $\Omega^{\ast}(\mathbf{r},\mathbf{s})$ we find that there a region of domain where both the eigenvalues are negative, therefore, the function is concave in this region; this region is shown in Fig.(\ref{SupFig1b}).

\begin{figure}[!ht]
	\begin{tabularx}{\linewidth}{CC}
		\begin{subfigure}[b]{0.4\textwidth}
			\includegraphics[width=\linewidth]{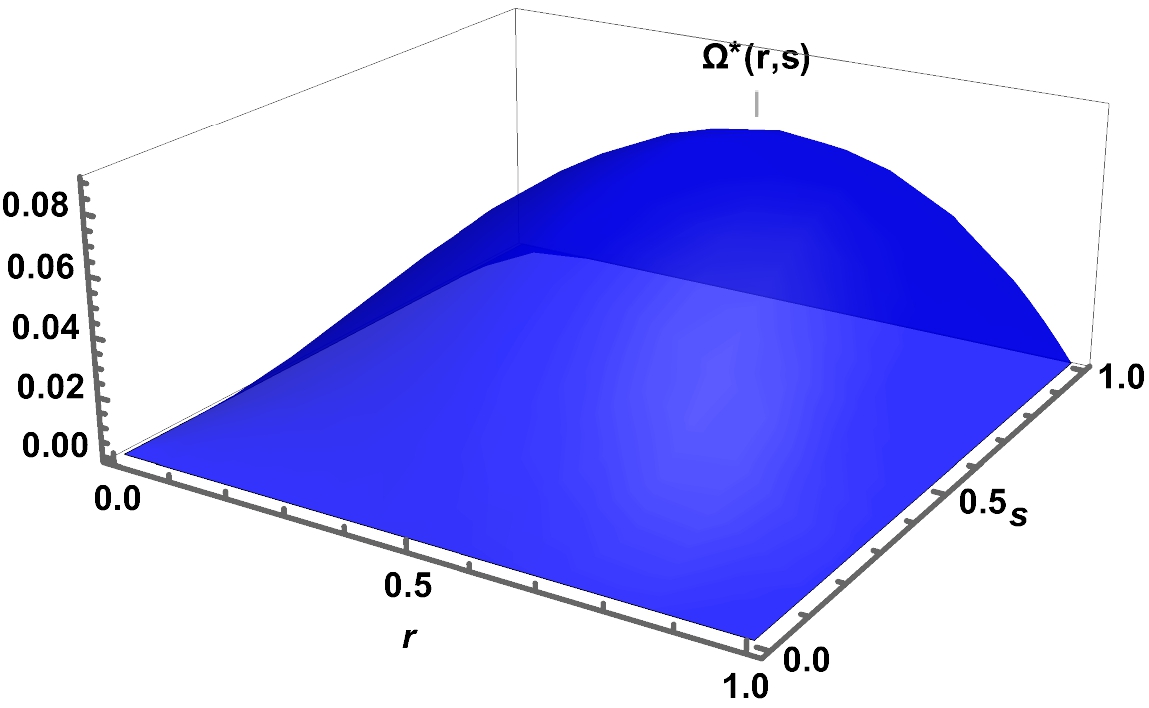}
			\caption{A $3$-dimensional  plot of $\Omega^{\ast}(\mathbf{r},\mathbf{s})$}
			\label{SupFig1a}
		\end{subfigure}
		&	
		\begin{subfigure}[b]{0.3\textwidth}
			\includegraphics[width=\linewidth]{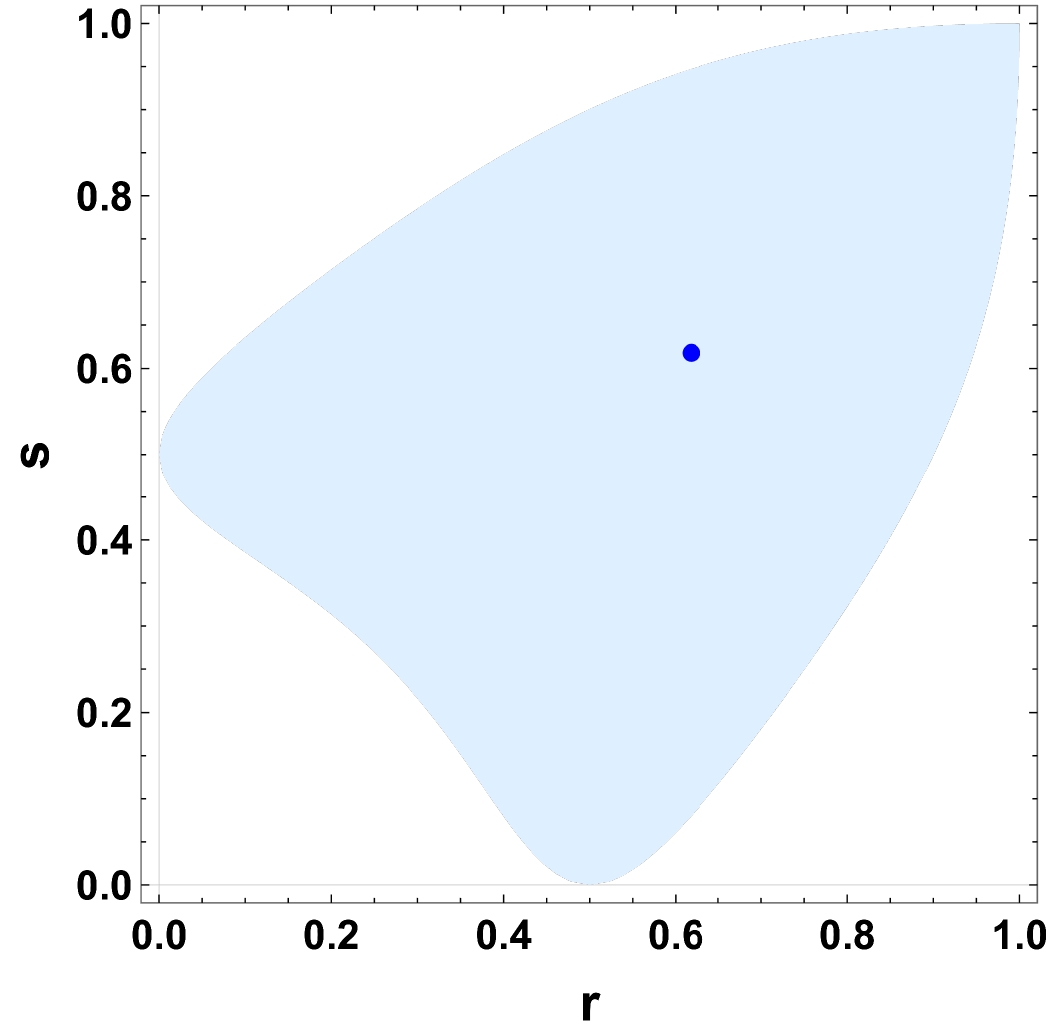}
			\caption{In shaded (light blue) region $\Omega^{\ast}(\mathbf{r},\mathbf{s})$ is concave; bold (blue) dot in the middle is the point $(\mathbf{r},\mathbf{s})=(\frac{\sqrt{5}-1}{2} ,\frac{\sqrt{5}-1}{2})$ where the function takes the maximum value.}
			\label{SupFig1b}
		\end{subfigure}
		
	\end{tabularx}
	\caption{Region of concavity for the function  $\Omega^{\ast}(\mathbf{r},\mathbf{s})$ }
	\label{SupFig1}
\end{figure}
Next, we constructed a concave cover $\mathcal{E}^{\ast}(\mathbf{r},\mathbf{s})$ of the function $\Omega^{\ast}(\mathbf{r},\mathbf{s})$. Recall that by definition a concave cover of the function $\Omega^{\ast}(\mathbf{r},\mathbf{s})$ is the lowest-valued concave function that overestimates or equals $\Omega^{\ast}(\mathbf{r},\mathbf{s})$ in its domain. We numerically computed the concave cover by applying the ConvexHullMesh function provided in MATHEMATICA. The concave cover $\mathcal{E}^{\ast}(\mathbf{r},\mathbf{s})$ is shown in Fig.(\ref{SupFig2a}) and the part of domain $\mathcal{R}^{\ast}$ where $\mathcal{E}^{\ast}(\mathbf{r},\mathbf{s})=\Omega^{\ast}(\mathbf{r},\mathbf{s})$ is shown as a shaded region in Fig.(\ref{SupFig2b}). Figure one of the main text is obtained by superposing Fig.(\ref{SupFig1b}) and Fig.(\ref{SupFig2b}). 

By a similar procedure one can construct the concave cover $\mathcal{E}(\mathbf{r},\mathbf{s})$ to any function $\Omega(\mathbf{r},\mathbf{s})$, and then the corresponding region $\mathcal{R}$ where $\mathcal{E}(\mathbf{r},\mathbf{s})=\Omega(\mathbf{r},\mathbf{s})$.
\begin{figure}[t!]
	\begin{tabularx}{\linewidth}{CC}
		\begin{subfigure}[b]{0.4\textwidth}
			\includegraphics[width=\linewidth]{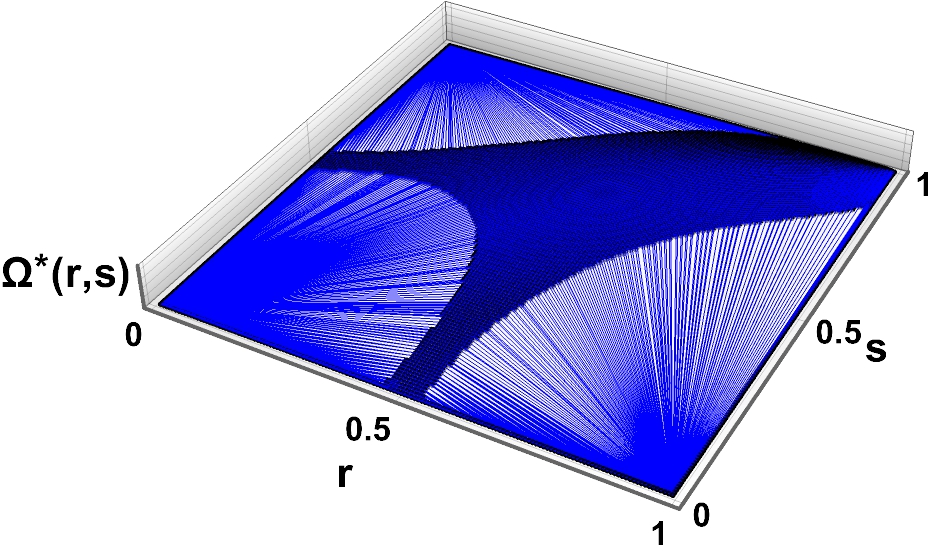}
			\caption{Concave cover $\mathcal{E}^{\ast}(\mathbf{r},\mathbf{s})$ of function $\Omega^{\ast}(\mathbf{r},\mathbf{s})$}
			\label{SupFig2a}
		\end{subfigure}
		&	
		\begin{subfigure}[b]{0.3\textwidth}
			\includegraphics[width=\linewidth]{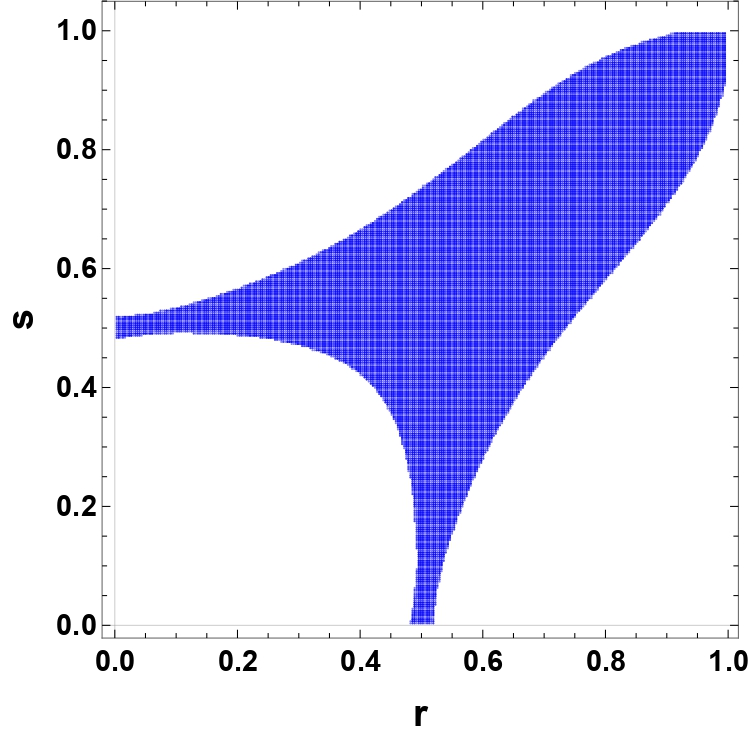}
			\caption{Shaded (Blue) region shows $\mathcal{R}^{\ast}$ where $\mathcal{E}^{\ast}(\mathbf{r},\mathbf{s})=\Omega^{\ast}(\mathbf{r},\mathbf{s})$}
			\label{SupFig2b}
		\end{subfigure}
		
	\end{tabularx}
	\caption{Construction of concave cover $\mathcal{E}^{\ast}(\mathbf{r},\mathbf{s})$ and region $\mathcal{R}^{\ast}$}
	\label{SupFig2}
\end{figure}

\section*{ Appendix~(B): Details of the proof for Lemma-$2$ in the main text}  \label{apnB}

Let us now consider the following single parameter family of functions, of the two variables $\mathbf{r}$ and $\mathbf{s}$, introduced in the construction of the proof of Lemma-2 in the main text
\begin{align}
	\Omega_{\nu}(\mathbf{r},\!\mathbf{s}) \!&=\!\Omega^{\ast}(\mathbf{r},\!\mathbf{s}) \!+\nu~\!p(\!+1\vert A_0\!) \!+(1\!-\!\nu)~\!p(\!-1\vert A_0\!)\!-\!1/2,\nonumber\\
	\!&=\!\Omega^{\ast}(\mathbf{r},\!\mathbf{s})\!+\nu~\!(\mathbf{s}\!-\!\mathbf{r}\mathbf{s}) \!+(1\!-\!\nu)~\!(1\!-\!\mathbf{s}\!+\!\mathbf{r}\mathbf{s})\!-\!1/2, \label{Eq1}
\end{align}
where the parameter $\nu\in (0,1)$ and variables $(\mathbf{r},~\mathbf{s})\in (0,1)\times(0,1)$. Firstly, we constructed finite subset of functions from the family in~Eq.(\ref{Eq1}) containing $N \!- \!1$ members: $S_{N}=\{\Omega_{\nu}(\mathbf{r},\mathbf{s}):~\nu \in \{d/N:~d\in\{1,2,...,N \!-\! 1 \} \}\}$. Let us denote the $k$-th member of $S_{N}$ by $S_{N}(k)$ where $k\in\{1,2,,...,N \!-\! 1\}$. 
\begin{figure}[h!]
	\begin{tabularx}{\linewidth}{CCCC}
		\centering
		\begin{subfigure}[b]{0.25\textwidth}
			\centering
			\caption{$\mathcal{R}_{\frac{1}{4}}$~$(\nu=1/4)$}
			\includegraphics[width=\textwidth]{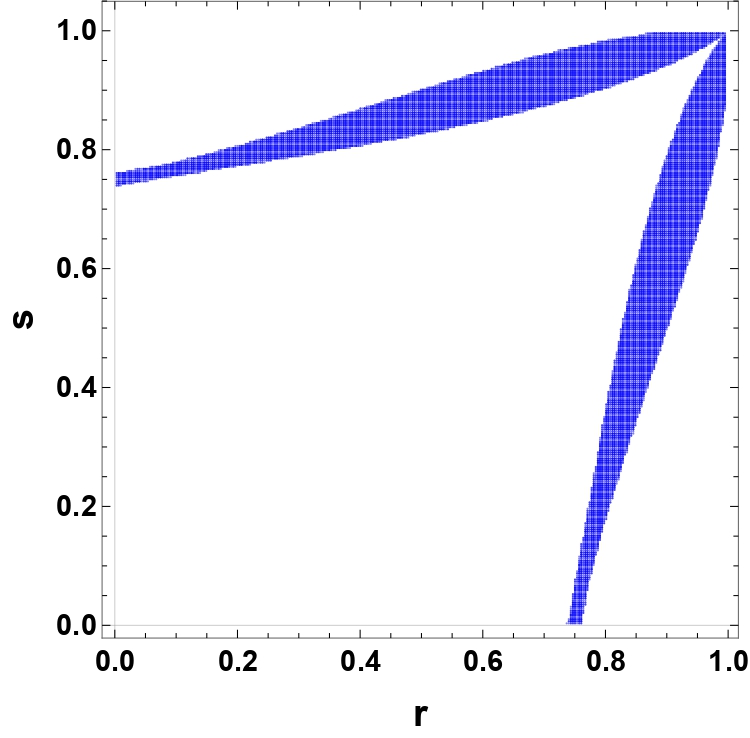}
			\label{Fig1a}
		\end{subfigure}
		&
		\begin{subfigure}[b]{0.25\textwidth}
			\centering
			\caption{$\mathcal{R}_{\frac{1}{2}}$~$(\nu=1/2)$}
			\includegraphics[width=\textwidth]{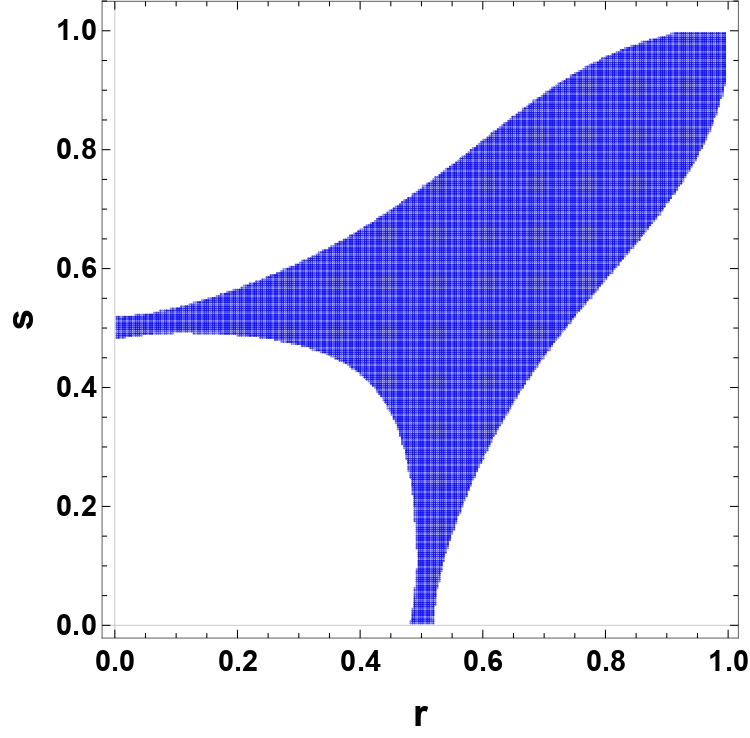}
			\label{Fig1b}
		\end{subfigure}
		&
		\begin{subfigure}[b]{0.25\textwidth}
			\centering
			\caption{$\mathcal{R}_{\frac{3}{4}}$~$(\nu=3/4)$}
			\includegraphics[width=\textwidth]{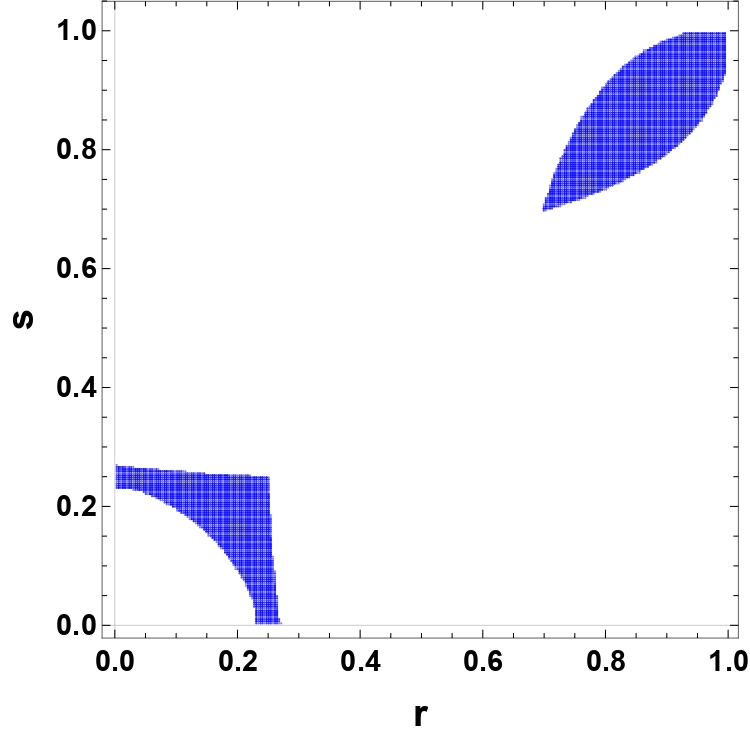}
			\label{Fig1c}
		\end{subfigure}
		&
		\begin{subfigure}[b]{0.25\textwidth}
			\centering
			\caption{$\cup_{\nu} \mathcal{R}_{\nu}$}
			\includegraphics[width=\textwidth]{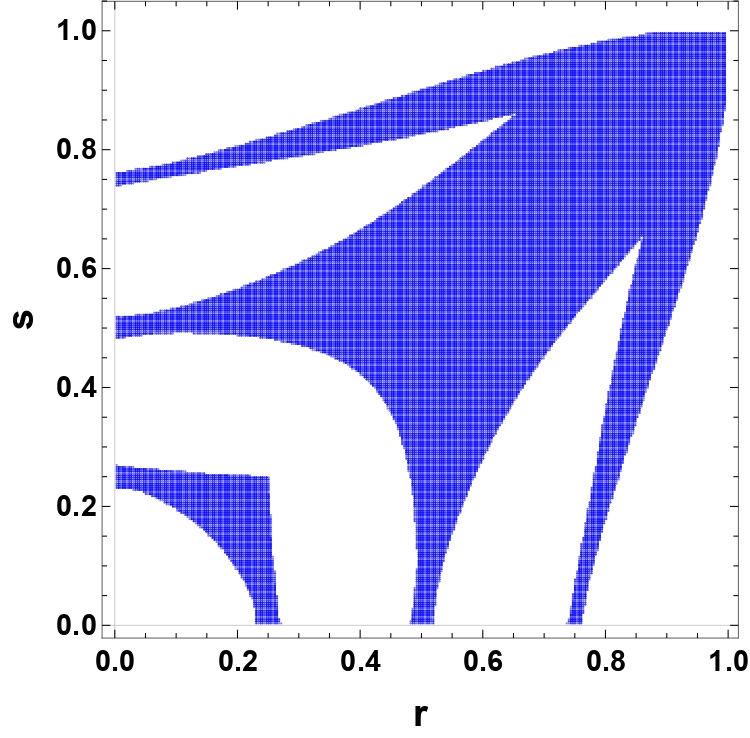}
			\label{Fig1d}
		\end{subfigure}
		
	\end{tabularx}
	\caption{Plots of regions $\mathcal{R}_{\nu}$ and $\cup_{\nu}\mathcal{R}_{\nu}$ for $\nu\in\{\frac{1}{4},\frac{1}{2},\frac{3}{4}\}$ and $N=4$.}
	\label{SupFig3}
\end{figure}

\begin{figure}[h!]
	\begin{tabularx}{\linewidth}{CCCC}
		\centering
		\begin{subfigure}[b]{0.25\textwidth}
			\centering
			\caption{$N=2$}
			\includegraphics[width=\textwidth]{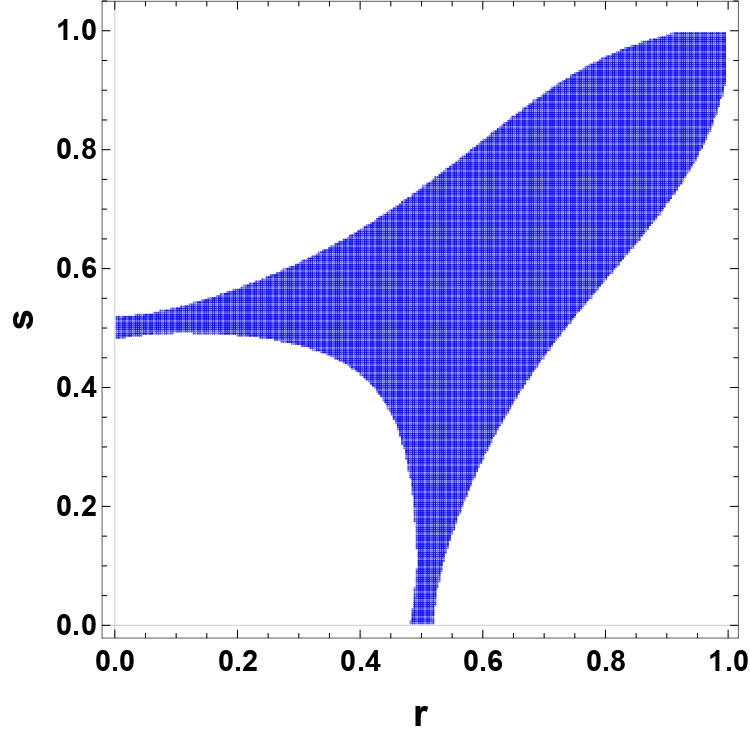}
			\label{Fig2a}
		\end{subfigure}
		&
		\begin{subfigure}[b]{0.25\textwidth}
			\centering
			\caption{$N=5$}
			\includegraphics[width=\textwidth]{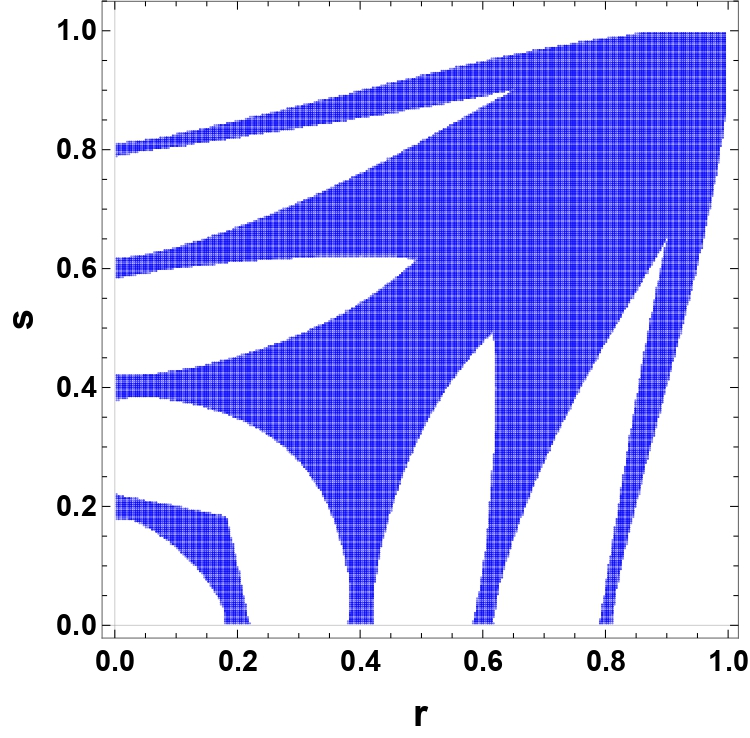}
			\label{Fig2b}
		\end{subfigure}
		&
		\begin{subfigure}[b]{0.25\textwidth}
			\centering
			\caption{$N=10$}
			\includegraphics[width=\textwidth]{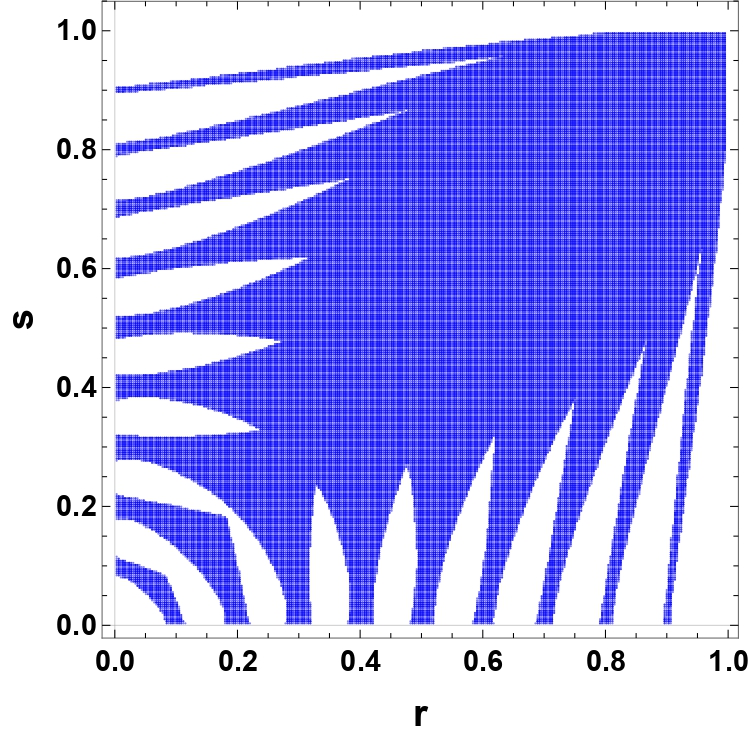}
			\label{Fig2c}
		\end{subfigure}
		&
		\begin{subfigure}[b]{0.25\textwidth}
			\centering
			\caption{$N=100$}
			\includegraphics[width=\textwidth]{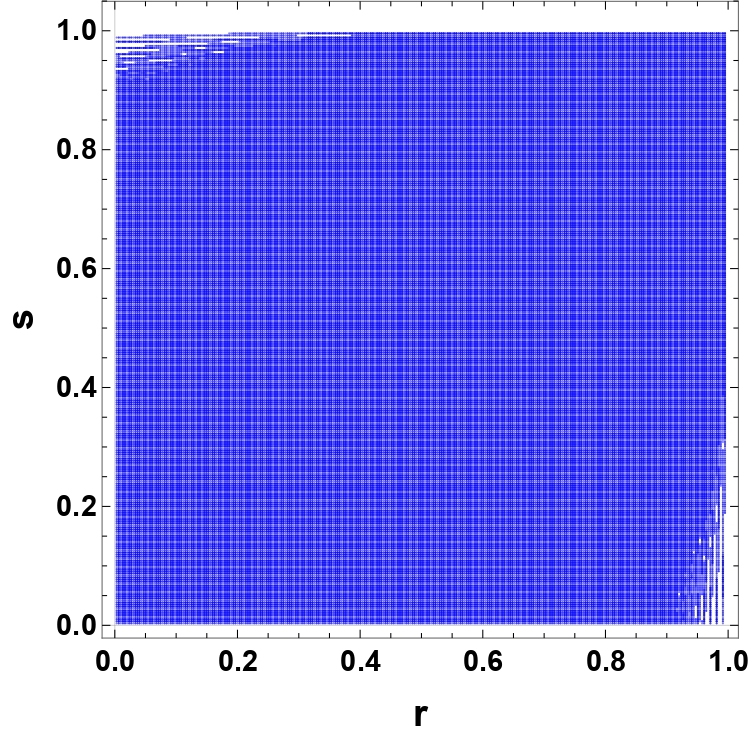}
			\label{Fig2d}
		\end{subfigure}
		
	\end{tabularx}
	\caption{Plots of region $\cup_{\nu}\mathcal{R}_{\nu}$ for different values of $N$.}
	\label{SupFig4}
\end{figure}
Secondly, we numerically constructed concave envelope functions of $S_{N}(k)$ for all values of $k$ by ConvexHullMesh function
in MATHEMATICA, then the region of domain where value of a concave envelope function coincide with the value of $S_{N}(k)$ was found as $\mathcal{R}_{\nu}$. Finally, for fixed values of $N$ the region $\cup_{\nu} \mathcal{R}_{\nu}$ was constructed. For example, the Fig.(\ref{SupFig3}) shows these region for $N=4$. By the above procedure, for a sufficiently large value of $N$ we find that $\cup_{\nu} \mathcal{R}_{\nu} = (0,1)\times (0,1)$. As an illustration see Fig.(\ref{SupFig4}) where the region $\cup_{\nu}\mathcal{R}_{\nu}$ is plotted when $N$ take values $2$, $5$, $10$, and $100$.

\end{widetext}


\begin{thebibliography}{99}



\bibitem{QST} In Quantum State Estimation, edited by M. Paris and J. Rehacek, Lect. Notes Phys. Vol. {\bf 649} (Springer,
Heidelberg, 2004).

\bibitem{Gross2010} D. Gross, Yi-Kai Liu, S. T. Flammia, S. Becker, and J. Eisert; Quantum State Tomography via Compressed Sensing, 
\href{https://doi.org/10.1103/PhysRevLett.105.150401}{Phys. Rev. Lett. {\bf 105}, 150401 (2010)}. 


\bibitem{Brien2004} J. L. O'Brien, G. J. Pryde, A. Gilchrist, D. F. V. James, N. K. Langford, T. C. Ralph, and A. G. White, Quantum Process Tomography of a Controlled-NOT Gate, \href{https://doi.org/10.1103/PhysRevLett.93.080502}{Phys. Rev. Lett. {\bf 93}, 080502 (2004)}. 

\bibitem{Mohseni2008} M. Mohseni, A. T. Rezakhani, and D. A. Lidar, Quantum-process tomography: Resource analysis of different strategies, 
\href{https://doi.org/10.1103/PhysRevA.77.032322}{Phys. Rev. A {\bf 77}, 032322 (2008)}. 
	
\bibitem{Guhne2009} O. Gühne and G. Tóth, Entanglement Detection, \href{https://doi.org/10.1016/j.physrep.2009.02.004}{Phys. Rep. {\bf 474}, 1 (2009)}.

\bibitem{Horodecki2009} R. Horodecki, P. Horodecki, M. Horodecki, and K. Horodecki, Quantum entanglement, 
\href{https://doi.org/10.1103/RevModPhys.81.865}{Rev. Mod. Phys. {\bf 81}, 865 (2009)}.

\bibitem{Carmeli2019} C. Carmeli, T. Heinosaari, and A. Toigo, Quantum Incompatibility Witnesses,
\href{https://doi.org/10.1103/PhysRevLett.122.130402}{Phys. Rev. Lett. {\bf 122}, 130402 (2019)}.

\bibitem{Bae2020}  J. Bae, D. Chruscinski and B.C. Hiesmayr, Mirrored entanglement witnesses,
\href{https://doi.org/10.1038/s41534-020-0242-z}{NPJ Quantum Inf. {\bf 6}, 15 (2020)}.

\bibitem{Bavaresco2018}
J. Bavaresco, N.H. Valencia, C. Kl{\"o}ckl, M. Pivoluska, P. Erker, N. Friis, M. Malik, M. Huber, Measurements in two bases are sufficient for certifying high-dimensional entanglement,
\href{https://doi.org/10.1038/s41567-018-0203-z}{Nature Phys. {\bf 14}, 1032–1037 (2018).}

\bibitem{Reichardt2012} B. Reichardt, F. Unger,  and U. Vazirani, Classical command of quantum systems,
\href{https://doi.org/10.1038/nature12035}{Nature {\bf 496}, 456 (2013)}. 

\bibitem{Vazirani2014} U. Vazirani and T. Vidick; Fully Device-Independent Quantum Key Distribution,
\href{https://doi.org/10.1103/PhysRevLett.113.140501}{Phys. Rev. Lett. {\bf 113}, 140501 (2014)}.

\bibitem{Miller2016} C. Miller and Y. Shi, Robust protocols for securely expanding randomness and distributing keys using untrusted quantum devices,
\href{https://doi.org/10.1145/2885493}{Journal of the ACM. {\bf 63}, 33 (2016)}.

\bibitem{Scarani2012} V. Scarani; The device-independent outlook on quantum physics (lecture notes on the power of Bell's theorem), \href{http://www.physics.sk/aps/pub.php?y=2012&pub=aps-12-04}{Acta Physica Slovaca {\bf62}, 347, (2012)}
\href{https://arxiv.org/abs/1303.3081}{arXiv:1303.3081 [quant-ph]}.
	
\bibitem{Pironioetal2016} S. Pironio, V. Scarani, and T. Vidick, Focus on device independent quantum information, \href{https://iopscience.iop.org/journal/1367-2630/page/Focus-on-Device-Independent-Quantum-Information}{New J. Phys. {\bf 18}, 100202 (2016)}.
	
\bibitem{AcinandNavascues2017} A. Acín and M. Navascués, \textit{Black box quantum mechanics}, in Quantum [Un]Speakables II, edited by R. Bertlmann and A. Zeilinger (Springer, Berlin, 2017), pp. 307–319.

\bibitem{Mayers2004} D. Mayers and A. Yao, Self testing quantum apparatus,
Quantum Inf. Comput. {\bf 4}, 273 (2004) [\href{https://arxiv.org/abs/quant-ph/0307205}{arXiv:quant-ph/0307205}].
	
\bibitem{McKague2011} M. McKague, T.H. Yang and V. Scarani, Robust self-testing of the singlet, \href{https://iopscience.iop.org/article/10.1088/1751-8113/45/45/455304}{J. Phys. A: Math. Theor. {\bf 45}, 455304 (2012)}.

\bibitem{YN2013} T. H. Yang and M. Navascues, Robust self-testing of unknown quantum systems into any entangled two-qubit states, \href{https://doi.org/10.1103/PhysRevA.87.050102}{Phys. Rev. A {\bf 87}, 050102(R) (2013)}.
	
\bibitem{Bamps-Pironio2015} C. Bamps and S. Pironio, Sum-of-squares decompositions for a family of Clauser-Horne-Shimony-Holt-like inequalities and their application to self-testing, \href{https://doi.org/10.1103/PhysRevA.91.052111}{Phys. Rev. A {\bf 91}, 052111 (2015)}.
	
\bibitem{Wang-Wu-Scarani2016} Y. Wang, X. Wu and V. Scarani, All the self-testings of the singlet for two binary measurements, \href{https://iopscience.iop.org/article/10.1088/1367-2630/ab6e49}{New J. Phys. {\bf18}, 025021 (2016)}.
	
\bibitem{Coladangeloetal} A. Coladangelo, K.T. Goh and V. Scarani, All pure bipartite entangled states can be self-tested, \href{https://www.nature.com/articles/ncomms15485#Sec10}{Nat. Commun. {\bf 8}, 15485 (2017)}.
	
\bibitem{Coopmanset.al.2019} T. Coopmans, J. Kaniewski, and C. Schaffner, Robust self-testing of two-qubit states, \href{https://doi.org/10.1103/PhysRevA.99.052123}{Phys. Rev. A {\bf 99}, 052123 (2019)}.

\bibitem{Jed2020} J. Kaniewski, Weak form of self-testing, \href{https://doi.org/10.1103/PhysRevResearch.2.033420}{Phys. Rev. Research {\bf 2}, 033420 (2020)}.

\bibitem{Ishizaka2020} S. Ishizaka, Geometrical self-testing of partially entangled two-qubit states, \href{https://iopscience.iop.org/article/10.1088/1367-2630/ab6e49}{New J. Phys. {\bf22}, 023022 (2020)}.


\bibitem{Baccarietal2020} F. Baccari, R. Augusiak, I. Supic, and A. Acin, Device-Independent Certification of Genuinely Entangled Subspaces, \href{https://doi.org/10.1103/PhysRevLett.125.260507}{Phys. Rev. Lett. {\bf 125}, 260507 (2020)}.

\bibitem{Rabelo2012} R. Rabelo, L. Y. Zhi, and V. Scarani, Device-Independent Bounds for Hardy’s Experiment, \href{https://journals.aps.org/prl/abstract/10.1103/PhysRevLett.109.180401}{Phys. Rev. Lett. {\bf 109}, 180401 (2012)}.
 
 \bibitem{DiCabello2021} A. Rai, M. Pivoluska, M. Plesch, S. Sasmal, M. Banik, and S. Ghosh, Device-independent bounds from Cabello's nonlocality argument, \href{https://journals.aps.org/pra/abstract/10.1103/PhysRevA.103.062219}{Phys. Rev. A {\bf 103}, 062219 (2021)}.
 
 \bibitem{SupicBowles2020}  I. Supic and J. Bowles, Self-testing of quantum systems: a review, \href{https://quantum-journal.org/papers/q-2020-09-30-337/}{Quantum {\bf 4}, 337 (2020)}.
	
\bibitem{EPR1935} A. Einstein, B. Podolsky, and N. Rosen, Can Quantum-Mechanical Description of Physical Reality Be Considered Complete?,
\href{https://doi.org/10.1103/PhysRev.47.777}{Phys. Rev. A {\bf 47}, 777 (1935)}.
	
 \bibitem{Bell1964} J. S. Bell, On the Einstein-Podolsky-Rosen paradox, \href{https://journals.aps.org/ppf/abstract/10.1103/PhysicsPhysiqueFizika.1.195}{Physics {\bf 1}, 195 (1964)}; J. S. Bell, On the problem of hidden variables in quantum mechanics, \href{https://journals.aps.org/rmp/abstract/10.1103/RevModPhys.38.447}{Rev. Mod. Phys. {\bf 38}, 447 (1966)}.
	 
\bibitem{CHSH} J. F. Clauser, M. A. Horne, A. Shimony, and R. A. Holt, Proposed Experiment to Test Local Hidden-Variable Theories, \href{https://journals.aps.org/prl/abstract/10.1103/PhysRevLett.23.880} {Phys. Rev. Lett. {\bf 23}, 880 (1969)}.
	 
\bibitem{Hardy1993} Lucien Hardy, Nonlocality for two particles without inequalities for almost all entangled states, \href{https://journals.aps.org/prl/pdf/10.1103/PhysRevLett.71.1665}{Phys. Rev. Lett. {\bf 71}, 1665 (1993)}.

\bibitem{MJWHall2011} M.J.W. Hall, Relaxed Bell inequalities and Kochen-Specker theorems,
\href{https://doi.org/10.1103/PhysRevA.84.022102}{Phys. Rev. A {\bf 84}, 022102 (2011)}.
	 
\bibitem{Scarani2019} V. Scarani, \textit{Bell Nonlocality}, Oxford University Press, London, (2019).
	 
\bibitem{Brunneretal2014} N. Brunner, D. Cavalcanti, S. Pironio, V. Scarani, and S. Wehner, Bell nonlocality, \href{https://journals.aps.org/rmp/abstract/10.1103/RevModPhys.86.419} {Rev. Mod. Phys. {\bf 86}, 419 (2014)}.

 
 \bibitem{Gohetal2018} K. T. Goh, J. Kaniewski, E. Wolfe, T. Vertesi, X. Wu, Y. Cai, Y.-C. Liang, and V. Scarani, Geometry of the set of quantum
 correlations, \href{https://journals.aps.org/pra/abstract/10.1103/PhysRevA.97.022104}{Phys. Rev. A {\bf 97}, 022104 (2018)}.
 
  \bibitem{Ishizaka2018} S. Ishizaka, Necessary and sufficient criterion for extremal quantum correlations in the simplest Bell scenario, \href{https://journals.aps.org/pra/abstract/10.1103/PhysRevA.97.050102}{Phys. Rev. A {\bf 97}, 050102(R) (2018)}.
 
 \bibitem{QVoid2019} A. Rai, C. Duarte, S. Brito, and R. Chaves, Geometry of the quantum set on no-signaling faces,
 \href{https://doi.org/10.1103/PhysRevA.99.032106}{Phys. Rev. A {\bf 99}, 032106 (2019)}.
 
 \bibitem{Fine1982} A. Fine; Hidden Variables, Joint Probability, and the Bell Inequalities,
 \href{https://doi.org/10.1103/PhysRevLett.48.291}{Phys. Rev. Lett. {\bf 48}, 291 (1982)}.
 
\bibitem{SeshadreesanGhosh2011} K. P. Seshadreesan and S. Ghosh, Constancy of maximal nonlocal probability in Hardy’s nonlocality test for bipartite quantum systems, \href{https://iopscience.iop.org/article/10.1088/1751-8113/44/31/315305}{J. Phys. A: Math. Theor. {\bf 44}, 315305 (2011)}.
	
\bibitem{GKar1997}  G. Kar, Hardy’s nonlocality for mixed states, \href{https://doi.org/10.1016/S0375-9601(97)00116-3}{Phys. Lett. A {\bf 228}, 119 (1997)}.

\bibitem{Goldstein1994} S. Goldstein, Nonlocality without inequalities for almost all entangled states for two particles,
\href{https://doi.org/10.1103/PhysRevLett.72.1951}{Phys. Rev. Lett. {\bf 72}, 1951 (1994)}.

\bibitem{Jordan1994} T.F. Jordan, Testing Einstein-Podolsky-Rosen assumptions without inequalities with two photons or particles with spin~$1/2$,
\href{https://doi.org/10.1103/PhysRevA.50.62}{Phys. Rev. A {\bf 50}, 62 (1994)}.
	
\bibitem{Masanes2005} Lluís Masanes, Asymptotic Violation of Bell Inequalities and Distillability, \href{https://journals.aps.org/prl/pdf/10.1103/PhysRevLett.97.050503}{Phys. Rev. Lett. {\bf 97}, 050503 (2006)}. 

\bibitem{Masanes2006} Lluís Masanes, Extremal quantum correlations for $N$ parties
with two dichotomic observables per site, \href{https://journals.aps.org/prl/pdf/10.1103/PhysRevLett.97.050503}{ arXiv:quant-ph/0512100v1 (2005)}. 

\bibitem{Pironio2009} S. Pironio, A. Acín, N. Brunner, N. Gisin, S. Massar and V. Scarani, Device-independent quantum key distribution secure against collective attacks,  \href{https://iopscience.iop.org/article/10.1088/1367-2630/11/4/045021}{New J. Phys. {\bf11}, 045021 (2009)}. 


\bibitem{Acin2012} A. Acín, S. Massar, and S. Pironio, Randomness versus Nonlocality and Entanglement, \href{https://doi.org/10.1103/PhysRevLett.108.100402}{Phys. Rev. Lett. {\bf 108}, 100402 (2012)}. 

\bibitem{Alexia2017} A. Salavrakos, R. Augusiak, J. Tura, P. Wittek, A. Acín, and S. Pironio, Bell Inequalities Tailored to Maximally Entangled States, \href{https://doi.org/10.1103/PhysRevLett.119.040402}{Phys. Rev. Lett. {\bf 119}, 040402 (2017)}. 


	
\end{thebibliography}
\end{document}